\definecolor{mypurple}{rgb}{.4,.0,.5}
\definecolor{darkgreen}{rgb}{0, 0.4,0}
\newcommand{\dgr}[1]{\textcolor{darkgreen}{#1}}
\definecolor{purplebrown}{rgb}{0.5,0.1,0.6}
\definecolor{ultclupcol}{rgb}{0.1,0.5,0.5}
\definecolor{ultclupcola}{rgb}{.5,0,.5}
\newcommand{\ultclupcola}[1]{\textcolor{ultclupcola}{#1}}
\newcommand{\bl}[1]{\textcolor{blue}{#1}}
\definecolor{shadebrown}{rgb}{0.1,0.1,0.9}
\definecolor{lightblue}{rgb}{0.2,0,1}
\newtcbox{\xmybox}{on line,
arc=7pt,
before upper={\rule[-3pt]{0pt}{10pt}},boxrule=0pt,
boxsep=0pt,left=6pt,right=6pt,top=0pt,bottom=0pt,enhanced, coltext=blue, colback=white!10!yellow}
\newtcbox{\xmyboxa}{on line,
arc=7pt,
before upper={\rule[-3pt]{0pt}{10pt}},boxrule=0pt,
boxsep=0pt,left=6pt,right=6pt,top=0pt,bottom=0pt,enhanced, colback=white!10!yellow}
\newtcbox{\xmyboxb}{on line,
arc=7pt,
before upper={\rule[-3pt]{0pt}{10pt}},boxrule=1pt,colframe=darkgreen!100!blue,
boxsep=0pt,left=6pt,right=6pt,top=0pt,bottom=0pt,enhanced, colback=white!10!yellow}
\newtcbox{\xmyboxc}{on line,
arc=7pt,
before upper={\rule[-3pt]{0pt}{10pt}},boxrule=.7pt,colframe=blue!100!blue,
boxsep=0pt,left=6pt,right=6pt,top=0pt,bottom=0pt,enhanced, coltext=blue, colback=white!10!yellow}
\newtcbox{\xmytboxa}{on line,
arc=7pt,
before upper={\rule[-3pt]{0pt}{10pt}},boxrule=.0pt,colframe=pink!50!yellow,
boxsep=0pt,left=6pt,right=6pt,top=0pt,bottom=0pt,enhanced, coltext=white, colback=blue!40!red}
\newtcbox{\xmytboxb}{on line,
arc=7pt,
before upper={\rule[-3pt]{0pt}{10pt}},boxrule=.0pt,colframe=pink!50!yellow,
boxsep=0pt,left=6pt,right=6pt,top=0pt,bottom=0pt,enhanced, coltext=white, colback=white!40!green}
\def\y{{\bf y}}
\def\v{{\bf v}}
\def\x{{\bf x}}
\def\x{{\mathbf x}}
\def\v{{\bf v}}
\def\x{{\bf x}}
\def\y{{\bf y}}
\def\h{{\bf h}}
\def\be{\begin{equation}}
\def\ee{\end{equation}}
\def\ba{\left[\begin{array}}
\def\ea{\end{array}\right]}
\def\v{{\bf v}}
\def\x{{\bf x}}
\def\y{{\bf y}}
\def\1{{\bf 1}}
\def\g{{\bf g}}
\def\0{{\bf 0}}
\def\erfinv{\mbox{erfinv}}
\def\erf{\mbox{erf}}
\def\erfc{\mbox{erfc}}
\def\mR{{\mathbb R}}
\def\mE{{\mathbb E}}
\newtheorem{theorem}{Theorem}
\begin{document}

\begin{singlespace}


\title {Sparse linear regression -- CLuP achieves the ideal \emph{exact} ML   
}
\author{
\textsc{Mihailo Stojnic
\footnote{e-mail: {\tt flatoyer@gmail.com}} }}
\date{}
\maketitle

\centerline{{\bf Abstract}} \vspace*{0.1in}

In this paper we revisit one of the classical statistical problems, the so-called sparse maximum-likelihood (ML) linear regression. As a way of attacking this type of regression, we present a novel CLuP mechanism that to a degree relies on the \bl{\textbf{Random Duality Theory (RDT)}} based algorithmic machinery that we recently introduced in \cite{Stojnicclupint19,Stojnicclupcmpl19,Stojnicclupplt19,Stojniccluplargesc20,Stojniccluprephased20}. After the initial success that the CLuP exhibited in achieving the exact ML performance while maintaining excellent computational complexity related properties in MIMO ML detection in \cite{Stojnicclupint19,Stojnicclupcmpl19,Stojnicclupplt19}, one would naturally expect that a similar type of success can be achieved in other ML considerations. The results that we present here confirm that such an expectation is indeed reasonable. In particular, within the sparse regression context, the introduced CLuP mechanism indeed turns out to be able to \bl{\textbf{\emph{achieve the ideal ML performance}}}. Moreover, it can substantially outperform some of the most prominent earlier state of the art algorithmic concepts, among them even the variants of the famous LASSO and SOCP from \cite{StojnicPrDepSocp10,StojnicGenLasso10,StojnicGenSocp10}. Also, our recent results presented in \cite{Stojniccluplargesc20,Stojniccluprephased20} showed that the CLuP has excellent \bl{\textbf{\emph{large-scale}}} and the so-called \bl{\textbf{\emph{rephasing}}} abilities. Since such large-scale algorithmic features are possibly even more desirable within the sparse regression context we here also demonstrate that the basic CLuP ideas can be reformulated to enable solving with a relative ease the regression problems with \bl{\textbf{\emph{several thousands}}} of unknowns. In addition to providing the fundamental theoretical concepts on which the desired CLuP relies, we also present a large set of results obtained through numerical experiments. As in our earlier CLuP related considerations, an excellent agreement between what the theory predicts and what one gets through the numerical experiments is observed as well.

\vspace*{0.25in} \noindent {\bf Index Terms: Sparse regression; Large scale CLuP; ML - estimation; Algorithms; Random duality theory}.

\end{singlespace}

\section{Introduction}
\label{sec:back}

Since our main interest in this paper will be a regression type of analysis, we start things off with recalling on the description of the standard regression models. Of central interest in such models is the following mathematical structure
\begin{eqnarray}\label{eq:linsys1a}
\y_i=f_i(A_{i,:}\x_{sol})+\sigma\v_i,
\end{eqnarray}
where $\y\in\mR^m$ is the vector of dependent variables, $A\in\mR^{m\times n}$ is the matrix whose rows ($A_{i,:},1\leq i\leq m$) are the vectors of the so-called independent variables, $\x_{sol}$ is the vector of unknown parameters, and $\v$ is the $\sigma$ scaled additive noise vector. Not that much of regression experience is fairly sufficient to recognize right here at the beginning that we use a substantially different notation from the one typically used in regression types of considerations. Namely, instead of (\ref{eq:linsys1a}), a typical regression consideration would be
\begin{eqnarray}\label{eq:linsys1b}
\y_i=f_i(X_{i,:}\beta)+\epsilon_i.
\end{eqnarray}
The reason for our deviation from the standard statistical routine will become much clearer as the presentation progresses. Here, we would just like to emphasize that, in order to make the presentation easier to follow, we find it way more convenient to utilize (\ref{eq:linsys1a}). Many of the concepts that we will present below will be related to those that we have already presented in \cite{Stojnicclupint19,Stojnicclupcmpl19,Stojnicclupplt19,Stojniccluplargesc20,Stojniccluprephased20} and earlier in \cite{StojnicCSetam09,StojnicCSetamBlock09,StojnicISIT2010binary,StojnicDiscPercp13,StojnicUpper10,StojnicGenLasso10,StojnicGenSocp10,StojnicPrDepSocp10,StojnicRegRndDlt10,Stojnicbinary16fin,Stojnicbinary16asym}. Since foundations of such concepts are discussed in great details in these papers we will here (instead of repeating some of these details) try to focus on presenting the most important differences. To achieve that we found that utilizing the notation similar to the one from \cite{Stojnicclupint19,Stojnicclupcmpl19,Stojnicclupplt19,Stojniccluplargesc20,Stojniccluprephased20} can be very helpful. In particular, simplification of the flow of this paper's presentation and establishing the connection to our earlier results both seemed easier to achieve while maintaining as much parallelism as possible with \cite{Stojnicclupint19,Stojnicclupcmpl19,Stojnicclupplt19,Stojniccluplargesc20,Stojniccluprephased20}.

Going back to (\ref{eq:linsys1a}) (and along the same lines of what we have just discussed), since our primary interest in this paper will be the so-called linear regression we will also assume $f_i(A_{i,:}\x_{sol})=A_{i,:}\x_{sol}$ (other types of regression we will discuss in separate papers). After such an assumption one arrives to the following linear noise corrupted model
\begin{eqnarray}\label{eq:linsys1}
\y=A\x_{sol}+\sigma\v.
\end{eqnarray}
We should also point out another rather small difference between (\ref{eq:linsys1}) and (\ref{eq:linsys1b}). Namely, in (\ref{eq:linsys1}) we also introduced a scaling factor $\sigma$ which will be useful in defining the so-called signal-to-noise (SNR) ratio -- a quantity of fundamental importance in studying and understanding both, the regression models that we consider here and the similar ones that we considered in \cite{Stojnicclupint19,Stojnicclupcmpl19,Stojnicclupplt19,Stojniccluplargesc20,Stojniccluprephased20}. For mathematical concreteness, we should also add that throughout the paper we will assume the so-called linear dimensions regime, i.e. we will assume that $m=\alpha n$, with $\alpha>0$ remaining a fixed constant of proportionality as $m$ and $n$ grow large.

Of course, as mentioned on many occasions in \cite{Stojnicclupint19,Stojnicclupcmpl19,Stojnicclupplt19,Stojniccluplargesc20,Stojniccluprephased20}, the above linear noise corrupted model is one of the key models not only in the linear regression but also in many other scientific fields, with information theory, signal processing, and machine learning probably being the most prominent ones (in some of these fields it is often referred to as the MIMO (multiple input -- multiple output) model). While the interest and the type of use of these models varies from one field to another they all have one thing in common. Namely, no matter what is the ultimate use of the model, the primary goal is almost always the estimation/recovery of the parameter vector $\x$. Such a question has been present in scientific literature for over two centuries and dates back to at least the second half of the eighteenth and early years of the nineteenth century and the work of Laplace, Boskovic, Legendre, and Gauss \cite{Legendre1805}. In the context of the linear regression the question is often a bit easier when compared to the corresponding ones in some other areas. Namely, within the linear regression setup, the usual structure of the estimation problem is such that the elements of matrix $A$ (its rows as vectors of independent variables) and the elements of $\y$ (dependent variables) are known whereas the noise vector $\v$ is not known. Even in such a scenario many criteria can be used for the recovery of $\x_{sol}$. Here we will rely on the so-called ML-criteria or the standard least squares which assumes that $\hat{\x}$ (as the estimate of $\x_{sol}$) is obtained as the solution to the following optimization problem
\begin{eqnarray}\label{eq:ml1}
\hat{\x}=\min_{\x\in{\cal X}}\|\y-A\x\|_2,
\end{eqnarray}
where ${\cal X}$ is the set of all allowed vectors $\x_{sol}$. It is rather clear that the estimation (\ref{eq:ml1}) is always of the least squares type. If one assumes that the elements of $\v$ are i.i.d. standard normals then it is also the so-called ML estimate. For the concreteness, we will in the rest of the paper rely on two additional assumption: 1) the elements of $A$ are also i.i.d. standard normals independent of the elements of $\v$; and 2) we will assume that the set of all permissible $\x_{sol}$, ${\cal X}$, contains all $k$-sparse vectors from $\mR^n$, with $k=\beta n$ and $\beta>0$, similarly to $\alpha$, remaining a fixed constant as $m$ and $n$ grow. The first of these two assumptions is just for the technical purposes and to basically ensure the easiness of the presentation (with fairly mild moment restrictions, any other distribution for $A$ in place of Gaussian works as well; more on this phenomenon can be found on various occasions in \cite{Stojnicclupint19,Stojnicclupcmpl19,Stojnicclupplt19,Stojniccluplargesc20,Stojniccluprephased20,StojnicCSetam09,StojnicCSetamBlock09,StojnicISIT2010binary,StojnicDiscPercp13,StojnicUpper10,StojnicGenLasso10,StojnicGenSocp10,StojnicPrDepSocp10,StojnicRegRndDlt10,Stojnicbinary16fin,Stojnicbinary16asym}). On the other hand the second assumption creates the setup of the so-called \bl{\textbf{\emph{sparse}}} ML linear regression. Such a setup has its own particularities and can be utilized to further modify/adapt the above ML recovery optimization problem (\ref{eq:ml1}).

\subsection{Sparse ML regression}
\label{sec:sparseml}

Since the sparse ML linear regression is also a well known research problem there has been quite a lot of work related to it over last several decades across many different fields. We leave a thorough discussion regarding all the prior work to survey papers, and instead mention only those works that are most directly relevant to what we present below. One of the most popular ways to incorporate the \emph{a priori} known fact that the unknown vector of parameters $\x$ is sparse is through the following so-called LASSO (least absolute shrinkage and selection operator) quadratic programming mechanism \cite{SanWill86,Tibsh96,CheDon95}
\begin{eqnarray}\label{eq:ml1lasso}
\hat{\x}=\min_{\x}(\|\y-A\x\|_2+c_{\ell_1}\|\x\|_1),
\end{eqnarray}
where $c_{\ell_1}$ is an appropriately chosen constant. The quality of the estimate $\hat{\x}$ and ultimately the success of the LASSO heavily relies on the choice of $c_{\ell_1}$. Typical heuristic based approaches suggest $c_{\ell_1}$ choices obtained based on the knowledge (a priori available or acquired through an estimation) that one has about the key structural components of the problem, the matrix $A$, and the vectors $\x$, $\y$, and $\v$. A lot of nice work regarding these and many other important aspects of the LASSO problems has been done over the last two decades (see, e.g. \cite{BunTsyWeg07,vandeGeer08,MeinYu09,JamRadLv09,W}). Moreover, over roughly the same period of time a lot of progress has been made researching other avenues to attack the sparse ML regression. In particular, a lot of progress has been made within the so-called noisy compressed sensing context where one of the most important underlying problems is precisely the problem from (\ref{eq:ml1}). Various alternatives to the LASSO have been proposed: the SOCP based basis pursuit \cite{CRT}, the Dantzig selector \cite{CanTao07,EfrHatTib07,CaiLv07}, and the LARS \cite{EfrHasJohnTib04}, just to name a few.

We will here though single out two lines of work that are probably the most relevant from the mathematical rigorousness point of view. In the context of interest here, the work of \cite{BayMon10,BayMon10lasso} and our own work \cite{StojnicGenLasso10,StojnicGenSocp10,StojnicPrDepSocp10} were the first to suggest, mathematically rigorously justifiable ways to utilize the knowledge about the structure of the above LASSO problem. \cite{BayMon10,BayMon10lasso} had done so through a state evolution type of analysis of a noise adapted variant of the approximate message passing (AMP) algorithm that was earlier introduced in the corresponding noiseless scenario within the above mentioned compressed sensing context (see, e.g. \cite{DonMalMon09,DonMalMon10}). On the other hand, in \cite{StojnicGenLasso10,StojnicGenSocp10,StojnicPrDepSocp10} we relied on the fundamentals of our \bl{\textbf{Random Duality Theory (RDT)}} and developed a framework to directly analyze the above LASSO algorithm. Through such an analysis we obtained an exact characterization of all parameters relevant in studying the performance of (\ref{eq:ml1lasso}). Needless to say, among them was also the exact value of the above mentioned so-called minimum MSE (mean square error) achieving optimal choice for $c_{\ell_1}$.

\begin{figure}[htb]
\centering
\centerline{\epsfig{figure=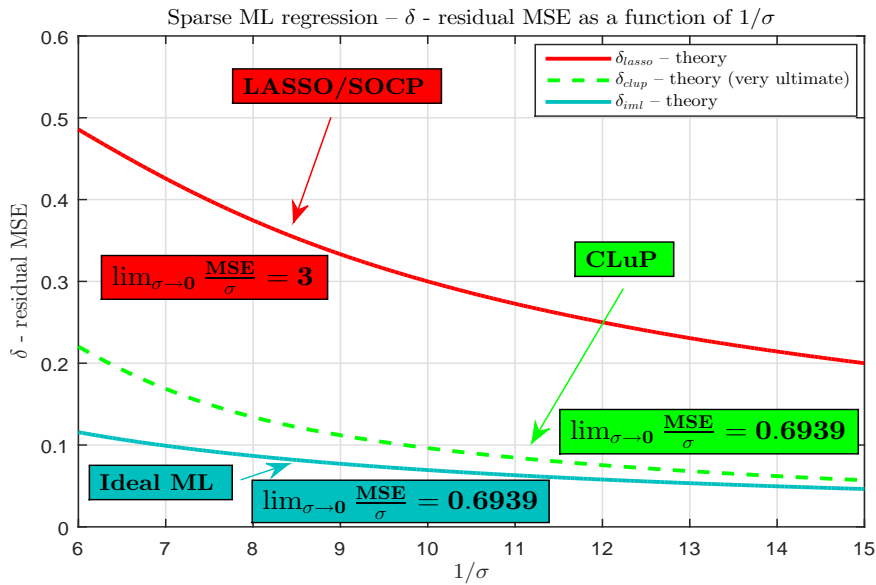,width=13.5cm,height=8cm}}
\caption{Comparison of $\delta$ (MSE (mean square error)) as a function of $1/\sigma$; $\alpha=0.5$; $\beta=0.1625$ ($n\rightarrow \infty$)}
\label{fig:HLmmselassoclupverybest}
\end{figure}

In this paper, we move things much further and design a novel CLuP (controlled loosening-up) type of algorithm that massively outperforms the LASSO variants from \cite{StojnicGenLasso10,StojnicGenSocp10,StojnicPrDepSocp10}. Moreover, not only are the residual MSEs often even three times smaller than the corresponding state of the art LASSO ones, they can be achieved though large-scale implementations that allow us to solve problems with several thousands of unknowns rather quickly with theoretically minimal \textbf{\emph{quadratic}} complexity per iteration (only a single matrix-vector multiplication is the core computation within each of the algorithm's iterations). In Figure \ref{fig:HLmmselassoclupverybest} we provide a brief preview as to what kind of performance one might ultimately expect from the CLuP. While we leave a thorough discussion regarding all the details related to the plots shown in the figure for later sections of the paper, here at the beginning we just briefly mention that the CLuP method does seem very powerful and in addition to being capable of handling very large scale problems it also provides an avenue towards achieving even the ideal ML performance (unbreakable within the sparse ML context).

We will organize the paper in the following way. We will first introduce the main CLuP mechanism that can handle the ML sparse regression. While doing so we will also recall on the fundamentals behind the CLuP concepts from \cite{Stojnicclupint19,Stojnicclupcmpl19,Stojnicclupplt19,Stojniccluplargesc20,Stojniccluprephased20}. The corresponding large-scale implementation will be discussed as well. We will then present both, the theoretical estimates as well as the results that can be obtained by practically running the designed CLuP algorithms. At the and we will provide a summarized discussion regarding the entire presentation with quite a few conclusions and avenues for future work.

\section{\bl{CLuP} -- sparse ML regression}
\label{sec:clupfund}

In this section we will present the basic CLuP mechanism that can be used to attack the sparse ML regression. Before doing so, we will need to recall on a few results that we have created in some of our earlier works. Along the same lines, we do mention that throughout the exposition we will assume a decent level of familiarity with some of the key concepts presented in \cite{Stojnicclupint19,Stojnicclupcmpl19,Stojnicclupplt19,Stojniccluplargesc20,Stojniccluprephased20,StojnicCSetam09,StojnicCSetamBlock09,StojnicISIT2010binary,StojnicDiscPercp13,StojnicUpper10,StojnicGenLasso10,StojnicGenSocp10,StojnicPrDepSocp10,StojnicRegRndDlt10,Stojnicbinary16fin,Stojnicbinary16asym}. As the context dictates and to ensure the smoothness of the presentation and the easiness of the following, we will on occasion try to briefly reemphasize some of the key ideas from these papers; on the other hand, for the full detailed story we will usually refer the interested reader to \cite{Stojnicclupint19,Stojnicclupcmpl19,Stojnicclupplt19,Stojniccluplargesc20,Stojniccluprephased20,StojnicCSetam09,StojnicCSetamBlock09,StojnicISIT2010binary,StojnicDiscPercp13,StojnicUpper10,StojnicGenLasso10,StojnicGenSocp10,StojnicPrDepSocp10,StojnicRegRndDlt10,Stojnicbinary16fin,Stojnicbinary16asym}.
Similarly to what was the case in our earlier sparse ML regression considerations in \cite{StojnicGenLasso10,StojnicGenSocp10,StojnicPrDepSocp10} we will often find it useful to work with alternative versions of the LASSO problem from (\ref{eq:ml1lasso}). Namely, the following so-called SOCP variant of the LASSO from (\ref{eq:ml1lasso}) was shown in \cite{StojnicGenLasso10,StojnicGenSocp10,StojnicPrDepSocp10} to be as powerful as the LASSO itself
\begin{eqnarray}
\hat{\x}=\mbox{arg}\min_{\x} & & \|\x\|_1  \nonumber \\
\mbox{subject to} & & \|\y-A\x\|_2\leq r_{socp}. \label{eq:ml1socp}
\end{eqnarray}
Assuming that $\hat{\x}_{lasso}$ is the solution of (\ref{eq:ml1lasso}) and that $\hat{\x}_{socp}$ is the solution of (\ref{eq:ml1socp}), one of the main takeaways from \cite{StojnicGenLasso10,StojnicGenSocp10,StojnicPrDepSocp10} was that a careful choice for $c_{\ell_1}$ and $r_{socp}$ can ensure that the corresponding LASSO and SOCP residual averaged MSEs are equal to each other. In other words one has
\begin{eqnarray}
\delta_{lasso} & \triangleq & \mE\|\hat{\x}_{lasso}-\x_{sol}\|_2\nonumber \\
\delta_{socp} & \triangleq & \mE\|\hat{\x}_{socp}-\x_{sol}\|_2, \label{eq:deltaequiv1}
\end{eqnarray}
and
\begin{equation}
\delta_{lasso} =\delta_{socp}. \label{eq:deltaequiv2}
\end{equation}
We do mention right here at the beginning that in the rest of the paper whenever we discuss a quantity that has the concentrating property its averaged value will often be assumed as its real value due to assumed large dimensional settings and overwhelming concentrating probabilities (as discussed on quite a few occasions throughout \cite{Stojnicclupint19,Stojnicclupcmpl19,Stojnicclupplt19,Stojniccluplargesc20,Stojniccluprephased20,StojnicCSetam09,StojnicCSetamBlock09,StojnicISIT2010binary,StojnicDiscPercp13,StojnicUpper10,StojnicGenLasso10,StojnicGenSocp10,StojnicPrDepSocp10,StojnicRegRndDlt10,Stojnicbinary16fin,Stojnicbinary16asym}, these concentrations are exponential in $n$). The following theorem is a bit more precise and generalized version of the above statements from (\ref{eq:deltaequiv1}) and (\ref{eq:deltaequiv2}).
\begin{theorem}(\bl{\textbf{Plain LASSO/SOCP}} -- equivalence \cite{StojnicGenLasso10,StojnicGenSocp10,StojnicPrDepSocp10})
  Let $\hat{\x}_{lasso}$ and $\hat{\x}_{socp}$ be the solutions to (\ref{eq:ml1lasso}) and (\ref{eq:ml1socp}), respectively. Also, assume the above mentioned statistical linear large dimensional scenario with $\alpha=\lim_{n \rightarrow \infty}\frac{m}{n}$ and $\beta=\lim_{n \rightarrow \infty}\frac{k}{n}$, and let $\alpha_w$ and $\beta$ satisfy the following \textbf{fundamental $\ell_1$ phase transition} characterization \cite{StojnicCSetam09,StojnicUpper10}
  \begin{equation}\label{eq:thm1eq1}
  \frac{(1-\beta)\exp(-(\erfinv(\frac{1-\alpha_w}{1-\beta}))^2)}{\sqrt{\pi}\alpha_w\erfinv(\frac{1-\alpha_w}{1-\beta})}=1.
  \end{equation}
  For $c_{\ell_1}=\sqrt{2}\erfinv(\frac{1-\alpha_w}{1-\beta})$ and $r_{socp}=\sigma\sqrt{(\alpha-\alpha_w)n}$ one has that $\forall \x_{sol}$
  \begin{eqnarray}\label{eq:thm1eq2}
  P\left ( \|\hat{\x}_{lasso}-\x_{sol}\|_2\leq \sigma\sqrt{\frac{\alpha_w}{\alpha-\alpha_w}}\right )&\rightarrow & 1\nonumber \\
  P\left ( \|\hat{\x}_{socp}-\x_{sol}\|_2\leq \sigma\sqrt{\frac{\alpha_w}{\alpha-\alpha_w}}\right )&\rightarrow & 1.
  \end{eqnarray}
  Moreover, $\exists \x_{sol}$ such that
  \begin{eqnarray}\label{eq:thm1eq2}
  P\left ( \|\hat{\x}_{lasso}-\x_{sol}\|_2\rightarrow \sigma\sqrt{\frac{\alpha_w}{\alpha-\alpha_w}}\right )&\rightarrow & 1\nonumber \\
  P\left ( \|\hat{\x}_{socp}-\x_{sol}\|_2\rightarrow \sigma\sqrt{\frac{\alpha_w}{\alpha-\alpha_w}}\right )&\rightarrow & 1.
  \end{eqnarray}\label{thm:lassosocp}
\end{theorem}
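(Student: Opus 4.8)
The plan is to establish the result through the Random Duality Theory (RDT) machinery, treating the SOCP formulation (\ref{eq:ml1socp}) as primary and deducing the LASSO statements from it by duality. First I would pass to the error vector $\w=\x-\x_{sol}$. Substituting $\y=A\x_{sol}+\sigma\v$ turns the data-fidelity term into $\|A\w-\sigma\v\|_2$, so the SOCP becomes $\min_\w\|\x_{sol}+\w\|_1$ subject to $\|A\w-\sigma\v\|_2\leq r_{socp}$, and $\|\hat{\x}_{socp}-\x_{sol}\|_2=\|\w\|_2$ is exactly the quantity to be controlled. The equivalence (\ref{eq:deltaequiv2}) is then obtained via Lagrangian duality: the LASSO (\ref{eq:ml1lasso}) is the penalized form of the constrained SOCP, and the KKT stationarity condition ties the multiplier $c_{\ell_1}$ to the radius $r_{socp}$, so that for the stated matched pair the two programs share a solution; hence it suffices to analyze the SOCP error and transport the conclusion.

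For the core RDT step, I would rewrite the feasibility/optimality condition as a min-max problem that is bilinear in $A$. Writing $\|A\w-\sigma\v\|_2=\max_{\|\u\|_2\leq 1}\u^T(A\w-\sigma\v)$ exposes $A$ only through the bilinear form $\u^TA\w$, and the error is characterized by the extremal $\|\w\|_2$ over the directions $\w$ along which $\|\x_{sol}+\w\|_1$ does not exceed $\|\x_{sol}\|_1$ and which keep $\|A\w-\sigma\v\|_2\leq r_{socp}$. I would then invoke the Gordon-type Gaussian comparison inequality to replace the coupled term $\u^TA\w$ by the decoupled surrogate $\|\w\|_2\g^T\u+\|\u\|_2\h^T\w$ with independent standard Gaussian $\g\in\mR^m$ and $\h\in\mR^n$. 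After maximizing/minimizing over $\u$, this converts the random high-dimensional program into one that concentrates around a deterministic optimization in a handful of scalar scaling/Lagrange parameters.

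The third step is to solve that scalar optimization. Here the $\ell_1$ geometry enters through terms of the form $\min_\w(\|\x_{sol}+\w\|_1-\gamma\h^T\w)$ at fixed $\|\w\|_2$, whose expectation reduces componentwise to Gaussian integrals of soft-threshold type; these are precisely the integrals that produce the $\erf$ and $\erfinv$ expressions. Balancing the stationarity conditions isolates the effective-measurement parameter $\alpha_w$ through the phase-transition identity (\ref{eq:thm1eq1}), fixes the threshold $c_{\ell_1}=\sqrt{2}\erfinv(\frac{1-\alpha_w}{1-\beta})$, forces the radius $r_{socp}=\sigma\sqrt{(\alpha-\alpha_w)n}$ (so that $r_{socp}$ sits below the noise level $\sigma\sqrt{\alpha n}$, making $\w=\0$ infeasible and thereby pinning the error), and yields the closed form $\|\w\|_2\to\sigma\sqrt{\frac{\alpha_w}{\alpha-\alpha_w}}$.

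Finally I would promote these typical-value statements to the high-probability claims in (\ref{eq:thm1eq2}) by invoking exponential concentration of the Lipschitz Gaussian functionals involved, which simultaneously transfers the conclusion back to the LASSO via the first step. For the upper bound I would take the worst case over all admissible $\x_{sol}$ of the prescribed sparsity, giving the ``$\forall \x_{sol}$'' inequality; for tightness I would exhibit a specific large-magnitude sparse $\x_{sol}$, for which the support is effectively resolved and the residual error is driven entirely by the noise, along which the RDT bound is attained, yielding the ``$\exists \x_{sol}$'' limit. The main obstacle I anticipate is the reverse (tightness) direction of the Gaussian comparison: Gordon's inequality supplies one-sided control for free, so proving the bound exact requires exploiting convexity of the SOCP together with a matching primal construction to show the comparison is asymptotically sharp, all while carefully controlling the non-smoothness of $\|\cdot\|_1$ and keeping the $\forall$ regime cleanly separated from the $\exists$ regime.
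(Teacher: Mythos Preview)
Your proposal is correct and is essentially the approach behind the result. Note, however, that the present paper does not actually prove this theorem: it is quoted from the author's earlier works \cite{StojnicGenLasso10,StojnicGenSocp10,StojnicPrDepSocp10}, and the text following the statement merely records that the worst case is attained for $k$-sparse $\x_{sol}$ with equal-magnitude nonzero components tending to infinity. That said, the RDT machinery you outline --- pass to the error vector, linearize the $\ell_2$ constraint via $\max_{\|\lambda\|_2=1}\lambda^T(\cdot)$, replace the bilinear Gaussian form by the decoupled $\g,\h$ surrogate, reduce to a deterministic optimization over a few scalars, and read off the soft-threshold/erfinv identities that pin $\alpha_w$, $c_{\ell_1}$, $r_{socp}$ and the MSE $\sigma\sqrt{\alpha_w/(\alpha-\alpha_w)}$ --- is exactly the route taken in those references and is the same machinery the paper deploys in Sections~\ref{sec:clupanal}--\ref{sec:adjlassocosp} for the closely related Theorems~\ref{thm:cluprd1} and~\ref{thm:adjlasspsocpthm} (the paper explicitly remarks that Theorem~\ref{thm:lassosocp} is the limiting worst-case variant of Theorem~\ref{thm:adjlasspsocpthm}). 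Your identification of the tightness direction as the delicate point, and of the large-magnitude equal-entry $\x_{sol}$ as the extremizer, also matches what the paper states.
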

The above theorem is of the so-called worst-case achieving type. Namely, it characterizes the performance of the two underlying algorithms through the behavior of the so-called worst-case MSE. However, it also ensures that such a worst case behavior can indeed be achieved. In fact, detailed analyses in \cite{StojnicGenLasso10,StojnicGenSocp10,StojnicPrDepSocp10} established that the worst-case MSE can be achieved for $k$-sparse $\x_{sol}$ that have equal magnitudes of the nonzero components that tend to infinity (\cite{StojnicPrDepSocp10} also went a bit further and established corresponding results for any value of the nonzero magnitudes). To maintain parallelism with some of the results presented in \cite{StojnicGenLasso10,StojnicGenSocp10,StojnicPrDepSocp10} we will below for the concreteness often assume this very same worst-case achieving scenario where all nonzero components of $\x_{sol}$ are equal to each other.

Of course, as mentioned above, one of the main points of \cite{StojnicGenLasso10,StojnicGenSocp10,StojnicPrDepSocp10} and the above theorem is that it succeeds in establishing that the SOCP from (\ref{eq:ml1socp}) is ultimately as powerful as the LASSO from (\ref{eq:ml1lasso}). On the other hand, we will below design an algorithm that substantially improves over both of them.

\subsection{Basic \bl{CLuP} for sparse regression}
\label{sec:clupspreg}

Our recent series of papers \cite{Stojnicclupint19,Stojnicclupcmpl19,Stojnicclupplt19} introduced the so-called Controlled loosening-up (\bl{\textbf{CLuP}}) algorithmic mechanism that turns out to be very powerful and useful in solving optimization problems that are in the classical complexity theory typically viewed as hard. To present the main ideas behind the entire CLuP concept and to ultimately demonstrate its practical power we in \cite{Stojnicclupint19,Stojnicclupcmpl19,Stojnicclupplt19} used the so-called MIMO ML detection as the benchmark problem. Here the underlying problem is also of the ML type. However, the structure of the unknown vector of parameters is substantially different from the one considered in \cite{Stojnicclupint19,Stojnicclupcmpl19,Stojnicclupplt19} and one has to be a bit more careful when designing the corresponding CLuP mechanism.

Given that this time ${\cal X}$ is the set of all $k$-sparse vectors in $\mR^n$, we consider the following CLuP type iterative procedure. Assume that $\x^{(0)}$ is a randomly chosen binary vector from $\mR^n$ ($\x^{(0)}$ can (but it doesn't even have to) be from ${\cal X}$). We will think of such a vector as the starting estimate for $\x_{sol}$. Then one continues building further estimates $\x^{(i)},i>0$ through the following mechanism
\begin{eqnarray}
\x^{(i+1)}=\frac{\x^{(i+1,s)}}{\|\x^{(i+1,s)}\|_2} \quad \mbox{with}\quad \x^{(i+1,s)}=\mbox{arg}\min_{\x} & & -(\x^{(i)})^T\x+c_{\ell_1}\|\x\|_1  \nonumber \\
\mbox{subject to} & & \|\y-A\x\|_2\leq r. \label{eq:clup1}
\end{eqnarray}
The above procedure is of course very simple and in way similar to some of our earlier CLuP considerations \cite{Stojnicclupint19,Stojnicclupcmpl19,Stojnicclupplt19}. However, there are a couple of key differences that make eventual utilization of the above mechanism substantially different from the ones discussed in \cite{Stojnicclupint19,Stojnicclupcmpl19,Stojnicclupplt19}. Namely, with a little bit of familiarity with \cite{Stojnicclupint19,Stojnicclupcmpl19,Stojnicclupplt19} one recognizes that $r$ is again the so-called radius that will play one of the most important roles in the overall success of the algorithm. Following the trend set in \cite{Stojnicclupint19,Stojnicclupcmpl19,Stojnicclupplt19} as well as in \cite{Stojniccluprephased20,Stojniccluplargesc20} we will think of $r$ as being a multiple of $r_{socp}$, i.e. $r=r_{sc}r_{socp}$, where $r_{socp}$ in a way corresponds to the $r_{socp}$ in (\ref{eq:ml1socp}) (moreover, we will quite often throughout the paper assume that $r_{socp}$ is as in Theorem \ref{thm:lassosocp}). Now, looking carefully at (\ref{eq:clup1}) and comparing it to the similar CLuP foundations presented in \cite{Stojnicclupint19,Stojnicclupcmpl19,Stojnicclupplt19} one observes the appearance of a new term $c_{\ell_1}\|\x\|_1$. This term in a way emulates the corresponding one in (\ref{eq:ml1lasso}). In a more informal language, as in (\ref{eq:ml1lasso}), its role here is to basically through a $c_{\ell_1}$ scaling emphasize the sparse structure of $\x_{sol}$.

\subsection{\bl{CLuP}'s large-scale implementation}
\label{sec:cluplargescimpl}

At this point though there is no real reason to believe that the above mechanism would be of any use, let alone that it can be substantially better than the state of the art LASSOs and SOCPs discussed earlier. In fact, although the procedure looks of a similar level of simplicity as the CLuPs from \cite{Stojnicclupint19,Stojnicclupcmpl19,Stojnicclupplt19}, things are not necessarily super simple either. Basically, in addition to the controlling radius $r$ (or its a scaled variant $r_{sc}$) one now has to be careful with respect to the choice of $c_{\ell_1}$ as well. Still, if one is somehow by a miracle super lucky that all these things can be handled and that their handling is so successful that the resulting procedure is indeed soundly better than the LASSO or SOCP, one then may expect that further analogies with some of the CLuP concepts can be established as well. That in the first place relates to the CLuP's large-scale abilities. Namely, as discussed in \cite{Stojniccluprephased20,Stojniccluplargesc20}, not only are the CLuPs from \cite{Stojnicclupint19,Stojnicclupcmpl19,Stojnicclupplt19} substantially improving on the existing convexity based state of the art algorithms, they are doing so while allowing for large-scale implementations that eventually can handle with an ease problems with several thousands of unknowns.

Thinking further along the lines of \cite{Stojniccluprephased20,Stojniccluplargesc20} one can try to establish a corresponding large-scale CLuP implementation that would be useful for the problems of interest here. For example, one can first transform the above basic CLuP mechanism into the following
\begin{eqnarray}
\xi_p\triangleq\lim_{n\rightarrow \infty}\mE\min_{\x} & & \xi_{LS}  \nonumber \\
\mbox{subject to} & & \xi_{LS}=-\|\x\|_2+\hat{c}_{\ell_1}\|\x\|_1+\hat{\gamma}_1(\|\y-A\x\|_2- r), \label{eq:LSclup1}
\end{eqnarray}
where hopefully  $\hat{c}_{\ell_1}$ and $\hat{\gamma}_1$ can be obtained through a machinery similar to the ones presented in \cite{Stojniccluprephased20,Stojniccluplargesc20}. Moreover, following further \cite{Stojniccluplargesc20} one also has
\begin{equation}
\frac{d\xi_{LS}}{d\x} =-\frac{\x}{\|\x\|_2}+\hat{c}_{\ell_1}\mbox{sign}(\x)+\hat{\gamma}_1\frac{-A^T(\y-A\x)}{\|\y-A\x\|_2}, \label{eq:LSclup2}
\end{equation}
and after equalling the derivative with zero
\begin{equation}
\frac{d\xi_{LS}}{d\x}=0 \Longleftrightarrow -\x\|\y-A\x\|_2+\hat{c}_{\ell_1}\mbox{sign}(\x)\|\x\|_2\|\y-A\x\|_2-\hat{\gamma}_1A^T\y\|\x\|_2+\hat{\gamma}_1A^TA\x\|\x\|_2=0. \label{eq:LSclup3}
\end{equation}
As discussed in \cite{Stojniccluplargesc20} the above equation can be solved in many different ways. Here, we follow into the footsteps of \cite{Stojniccluplargesc20} and choose probably one of the simplest possible methods, i.e. we choose a simple contraction with a scaled regularization. In other words,
\begin{eqnarray}
& &  -\x\|\y-A\x\|_2+\hat{c}_{\ell_1}\mbox{sign}(\x)\|\x\|_2\|\y-A\x\|_2-\hat{\gamma}_1A^T\y\|\x\|_2+\hat{\gamma}_1A^TA\x\|\x\|_2=0 \nonumber \\
&\Longleftrightarrow & c_{q,2}\x-\x\|\y-A\x\|_2-c_{q,2}\x+\hat{c}_{\ell_1}\mbox{sign}(\x)\|\x\|_2\|\y-A\x\|_2-\hat{\gamma}_1A^T\y\|\x\|_2+\hat{\gamma}_1A^TA\x\|\x\|_2=0\nonumber \\
&\Longleftrightarrow & \x(c_{q,2}-\|\y-A\x\|_2)=c_{q,2}\x-\hat{c}_{\ell_1}\mbox{sign}(\x)\|\x\|_2\|\y-A\x\|_2+\hat{\gamma}_1A^T\y\|\x\|_2-\hat{\gamma}_1A^TA\x\|\x\|_2 \nonumber \\
&\Longleftrightarrow & \x=\frac{c_{q,2}\x-\hat{c}_{\ell_1}\mbox{sign}(\x)\|\x\|_2\|\y-A\x\|_2+\hat{\gamma}_1A^T\y\|\x\|_2-\hat{\gamma}_1A^TA\x\|\x\|_2}{c_{q,2}-\|\y-A\x\|_2}, \label{eq:LSclup4}
\end{eqnarray}
where, of course, $c_{q,2}$ is an appropriately selected constant. This is then enough to establish the above mentioned contraction
\begin{equation}
\x^{(i+1)} = \frac{c_{q,2}\x^{(i)}-\hat{c}_{\ell_1}\mbox{sign}(\x^{(i)})\|\x^{(i)}\|_2\|\y-A\x^{(i)}\|_2+\hat{\gamma}_1A^T\y\|\x^{(i)}\|_2-\hat{\gamma}_1A^TA\x^{(i)}\|\x^{(i)}\|_2}{c_{q,2}-\|\y-A\x^{(i)}\|_2}.\label{eq:LSclup5}
\end{equation}
Moreover, assuming that in the limit $\|\x^{(i)}\|_2\rightarrow \sqrt{\hat{c}_2}$ and $\|\y-A\x^{(i)}\|_2\rightarrow r=r_{sc}r_{socp}$, one also has
\begin{equation}
\x^{(i+1)}=\frac{c_{q,2}\x^{(i)}-\hat{c}_{\ell_1}\mbox{sign}(\x^{(i)})\sqrt{\hat{c}_2}r_{sc}r_{socp}+\hat{\gamma}_1A^T\y\sqrt{\hat{c}_2}-\hat{\gamma}_1A^TA\x^{(i)}\sqrt{\hat{c}_2}}{c_{q,2}-r_{sc}r_{socp}}.\label{eq:LSclup6}
\end{equation}
It is not that difficult to see that the above contraction mechanism inherits all the excellent features that its predecessors from \cite{Stojniccluprephased20,Stojniccluplargesc20} posses. The complexity per iteration is theoretically minimal, i.e. it is quadratic and it includes only a single matrix-vector multiplication which amounts to $mn$ basic addition/multiplication operations. All other discussions from \cite{Stojniccluprephased20,Stojniccluplargesc20} regarding practical running of this contraction (in particular, regarding rerunning, the choices of stopping criteria, the sufficient number of iterations and so on) remain in place. We skip all these details and instead refer the interested reader to \cite{Stojniccluprephased20,Stojniccluplargesc20} for a more thorough discussion in these directions.

\subsection{\bl{CLuP}'s analytical foundation}
\label{sec:clupanal}

Carefully looking at the above discussion, one can observe that there are several key things that would need to be addressed before one can actually utilize the above algorithmic structure. The one that comes to mind first is the choice of critical parameters, $r_{sc}$, $\hat{c}_2$ , $\hat{\gamma}_1$ and $\hat{c}_{\ell_1}$. A priori that is not an easy task and requires a very careful sequence of considerations.

As in \cite{Stojniccluprephased20,Stojniccluplargesc20}, we will rely on the algorithmic properties of the \bl{\textbf{Random Duality Theory}} and a large set of results that we created in \cite{StojnicCSetam09,StojnicCSetamBlock09,StojnicISIT2010binary,StojnicDiscPercp13,StojnicUpper10,StojnicGenLasso10,StojnicGenSocp10,StojnicPrDepSocp10,StojnicRegRndDlt10,Stojnicbinary16fin,Stojnicbinary16asym}. Also, to ensure the easiness of the following, we will try to parallel the relevant portions of the exposition from \cite{Stojniccluprephased20,Stojniccluplargesc20} as often as possible. However, we will skip many details that are similar to the corresponding ones from \cite{Stojniccluprephased20,Stojniccluplargesc20} and instead focus on emphasizing the key differences. We start by considering the optimization problem that one obtains at the end of the CLuP converging process defined in (\ref{eq:clup1})
\begin{eqnarray}
\min_{\x} & & -\|\x\|_2+c_{\ell_1}\|\x\|_1  \nonumber \\
\mbox{subject to} & & \|\y-A\x\|_2\leq r. \label{eq:clup2}
\end{eqnarray}
Plugging the $\y$ from (\ref{eq:linsys1}) into (\ref{eq:clup2}) we have
\begin{eqnarray}
\min_{\x} & & -\|\x\|_2+c_{\ell_1}\|\x\|_1  \nonumber \\
\mbox{subject to} & & \|[A \v]\begin{bmatrix}\x_{sol}-\x\\\sigma\end{bmatrix}\|_2\leq r. \label{eq:clup4}
\end{eqnarray}
 To utilize the \bl{\textbf{Random Duality Theory}} considerations from \cite{Stojnicclupint19,Stojnicclupcmpl19,Stojnicclupplt19}, we first define the two critical concentrating parameters, $c_1$ and $c_2$ as
\begin{eqnarray}
c_2 & = & \|\x\|_2^2\nonumber \\
c_1 & = & (\x_{sol})^T\x, \label{eq:clup3}
\end{eqnarray}
and then transform (\ref{eq:clup4}) into
\begin{eqnarray}
\min_{c_2\in[0,1]}\min_{\|\x\|_2^2=c_2} & & -\sqrt{c_2}+c_{\ell_1}\|\x\|_1  \nonumber \\
\mbox{subject to} & & \|[A \v]\begin{bmatrix}\x_{sol}-\x\\\sigma\end{bmatrix}\|_2\leq r. \label{eq:clup4ga}
\end{eqnarray}
Forming the standard Largangian the above optimization can be reformulated in the following way
\begin{eqnarray}
\min_{c_2\in[0,1]}\min_{\|\x\|_2^2=c_2} \max_{\gamma_1} & & -\sqrt{c_2}+c_{\ell_1}\|\x\|_1 +\gamma_1\left (\mbox{max}_{\|\lambda\|_2=1}\lambda^T\left ([A \v]\begin{bmatrix}\x_{sol}-\x\\\sigma\end{bmatrix}\right )- r\right ). \label{eq:clup5}
\end{eqnarray}
Moreover, assuming the concentration of $\gamma_1$ one also has
\begin{eqnarray}
\min_{c_2\in[0,1]}\max_{\gamma_1}\min_{\|\x\|_2^2=c_2} \max_{\|\lambda\|_2=1}  & & -\sqrt{c_2}+c_{\ell_1}\|\x\|_1 +\gamma_1\lambda^T\left ([A \v]\begin{bmatrix}\x_{sol}-\x\\\sigma\end{bmatrix}\right )- \gamma_1r. \label{eq:clup5a}
\end{eqnarray}
One can now proceed in the standard \bl{\textbf{RDT}} fashion outlined in \cite{Stojnicclupint19}.

\vspace{.1in}
\noindent \xmyboxc{\textbf{\emph{\dgr{Forming/handling the Random dual}}}}

\vspace{.1in}
 One of the key aspects of the \bl{\textbf{RDT}} is the introduction of the so-called \bl{\textbf{random dual}} to the above primal (see, e.g. \cite{StojnicCSetam09,StojnicISIT2010binary,StojnicDiscPercp13,StojnicGenLasso10,StojnicGenSocp10,StojnicPrDepSocp10,StojnicRegRndDlt10}). Following into the footsteps of a long line of our earlier work we then have for the \bl{\textbf{random dual}}
\begin{eqnarray}
\min_{c_2}\min_{c_1}\max_{\gamma_1,\nu}\min_{\|\x\|_2=c_2}\max_{\|\lambda\|_2=1} & &  \xi_{RD}^{(1)}, \label{eq:clup10}
\end{eqnarray}
with
\begin{equation}
\xi_{RD}^{(1)} =-\sqrt{c_2}+c_{\ell_1}\|\x\|_1+\gamma_1 (\lambda^T\g\sqrt{\|\x_{sol}-\x\|_2^2+\sigma^2}+\|\lambda\|_2(\h^T(\x_{sol}-\x)+h_0\sigma)-r) +\nu((\x_{sol})^T\x-c_1)
\label{eq:clup10a}
\end{equation}
and the components of $\g$ and $\h$ are $m$ and $n$ dimensional standard normal vectors with i.i.d. components ($h_0$ is also a standard normal random variable independent of all other random variables). After solving over $\lambda$ and ignoring $h_0$ due to the concentration the above can be transformed into
\begin{eqnarray}
\min_{c_2}\min_{c_1}\max_{\gamma_1,\gamma,\nu}\min_{\x} & & \xi_{RD}^{(2)}, \label{eq:clup10b}
\end{eqnarray}
with
\begin{equation}
\xi_{RD}^{(2)} =-\sqrt{c_2}+c_{\ell_1}\|\x\|_1+\gamma_1 (\|\g\|_2\sqrt{\|\x_{sol}-\x\|_2^2+\sigma^2}+\h^T(\x_{sol}-\x)-r) +\nu((\x_{sol})^T\x-c_1)+\gamma (\|\x\|_2^2-c_2). \label{eq:clup10c}
\end{equation}
We can then define $\xi_{RD,\gamma_1}(\alpha,\sigma;c_2,c_1,\gamma,\nu)$ (as the so-called optimizing objective of the random dual) in the following way
\begin{equation}
\xi_{RD,\gamma_1}(\alpha,\sigma;c_2,c_1,\gamma,\nu)=\lim_{n\rightarrow\infty}\frac{1}{\sqrt{n}}\mE\min_{\x} \xi_{RD}^{(2)}. \label{eq:clup10d}
\end{equation}
One can then proceed further as in \cite{StojnicCSetam09,StojnicISIT2010binary,StojnicDiscPercp13,StojnicGenLasso10,StojnicGenSocp10,StojnicPrDepSocp10,StojnicRegRndDlt10,Stojnicclupint19} and handle the above optimization. To that end we follow say \cite{StojnicDiscPercp13,Stojnicclupint19} and introduce
\begin{eqnarray}
f_{box}(\h;c_2,c_1)=\max_{\gamma_1,\gamma,\nu}\min_{\x} & & c_{\ell_1}\|\x\|_1-\gamma_1 \h^T\x +\nu((\x_{sol})^T\x-c_1)+\gamma (\|\x\|_2^2-c_2). \label{eq:clup12}
\end{eqnarray}
For the concreteness we will also assume the scenario that corresponds to the one from Theorem \ref{thm:lassosocp}, i.e. we will assume that all nonzero components of $\x_{sol}$ are equal to $1/\sqrt{k}$ (also for the concreteness and without a loss of generality, we will assume that they are the first $k$ components of $\x_{sol}$). We do however, emphasize that this fact is only for the concreteness of the analysis purposes. In other words, this is not a piece of knowledge that is available beforehand and that potentially can be utilized in the algorithm's design. Keeping this in mind and after a bit of rescaling one has similarly to (110) from \cite{StojnicDiscPercp13}
\begin{eqnarray}
f_{box}(\h;c_2,c_1)  =  \max_{\gamma,\nu} & & \frac{1}{\sqrt{n}}\left (\sum_{i=1}^{n}f_{box}^{(1)}(\h_i,\gamma_1,\gamma,\nu)\right )-\nu c_1\sqrt{\beta n}-\gamma c_2\sqrt{n},\label{eq:clup13}
\end{eqnarray}
where
\begin{equation}
f_{box}^{(1)}(\h_i,\gamma_1,\gamma,\nu)=\begin{cases}-\frac{(|\gamma_1\h_i-\nu|-c_{\ell_1})^2}{4\gamma}, & i\leq k\\
-\frac{(|\gamma_1\h_i|-c_{\ell_1})^2}{4\gamma}, & i>k,
\end{cases}\label{eq:clup14}
\end{equation}
and $\gamma$, $\nu$, and $c_{\ell_1}$ are $\sqrt{n}$, $\sqrt{k}$, and $\sqrt{n}$ respectively, scaled versions of the corresponding $\gamma$, $\nu$, and $c_{\ell_1}$ from (\ref{eq:clup12}) (here as well as on many other occasions we will use the same notation for dimension scaled quantities; this will typically be clear from the context and we will often skip emphasizing it). One also has that the optimizing $\x_i$ is given as
\begin{equation}
\x_i=\begin{cases}-\frac{|\gamma_1\h_i-\nu|-c_{\ell_1}}{2\gamma}, & i\leq k\\
-\frac{|\gamma_1\h_i|-c_{\ell_1}}{2\gamma}, & i>k
\end{cases}.\label{eq:clup14a}
\end{equation}
One then proceeds with solving the integrals and obtains
\begin{equation}
\mE f_{box}^{(1)}(\h_i,\gamma_1,\gamma,\nu)=\begin{cases}-\frac{I_{11}(\gamma_1,\nu,c_{\ell_1})+I_{12}(\gamma_1,\nu,c_{\ell_1})}{4\gamma}, & i\leq k\\
-\frac{I_{11}(\gamma_1,0,c_{\ell_1})+I_{12}(\gamma_1,0,c_{\ell_1})}{4\gamma}, & i>k
\end{cases},\label{eq:clup15}
\end{equation}
where
\begin{eqnarray}
I_{11}(\gamma_1,\nu,c_{\ell_1}) &  = &  (0.5\erfc((c_{\ell_1}-\nu)/\gamma_1/\sqrt{2}) (\gamma_1^2 + (\nu- c_{\ell_1})^2) - \frac{\gamma_1}{\sqrt{2\pi}} \exp(-((\nu-c_{\ell_1})/\gamma_1)^2/2) (c_{\ell_1}-\nu))\nonumber \\
I_{12}(\gamma_1,\nu,c_{\ell_1}) &  = &  (0.5\erfc((c_{\ell_1}+\nu)/\gamma_1/\sqrt{2}) (\gamma_1^2 + (\nu+ c_{\ell_1})^2) - \frac{\gamma_1}{\sqrt{2\pi}} \exp(-((\nu+c_{\ell_1})/\gamma_1)^2/2) (c_{\ell_1}+\nu)).\nonumber \\ \label{eq:clup16}
\end{eqnarray}
Combining (\ref{eq:clup10})-(\ref{eq:clup16}) one finally has for $\xi_{RD}(\alpha,\sigma;c_2,c_1,\gamma,\nu)$
\begin{equation}
\xi_{RD}(\alpha,\sigma;c_2,c_1,\gamma,\nu)=-\sqrt{c_2}+\gamma_1\sqrt{\alpha}\sqrt{1-2c_1+c_2+\sigma^2}
-\frac{I}{4\gamma}-\gamma_1 r-\nu c_1\sqrt{\beta}-\gamma c_2, \label{eq:clup17}
\end{equation}
 with
\begin{equation}
I= \beta(I_{11}(\gamma_1,\nu,c_{\ell_1})+I_{12}(\gamma_1,\nu,c_{\ell_1}))+(1-\beta)(I_{11}(\gamma_1,0,c_{\ell_1})+I_{12}(\gamma_1,0,c_{\ell_1})),\label{eq:clup17a}
\end{equation}
and $r$ being the $\sqrt{n}$ scaled version of $r$ from (\ref{eq:clup10c}). Following into the footsteps of \cite{Stojniccluplargesc20} one can utilize \cite{Stojnicclupint19}'s Theorem 1, and establish the following optimization problem as the main object of interest
\begin{equation}
\min_{c_2\in[0,1]}\min_{c_1\in[0,\sqrt{c_2}]}\max_{\gamma_1,\gamma,\nu}  \xi_{RD,\gamma_1}(\alpha,\sigma;c_2,c_1,\gamma,\nu). \label{eq:clupg5a}
\end{equation}
After quickly solving the above optimization over $\gamma$ one also has
\begin{equation}
\min_{c_2\in[0,1]}\min_{c_1\in[0,\sqrt{c_2}]}\max_{\gamma_1,\nu}  \xi_{RD,\gamma_1}(\alpha,\sigma;c_2,c_1,\nu), \label{eq:clupg5a1}
\end{equation}
where
\begin{equation}
\xi_{RD,\gamma_1}(\alpha,\sigma;c_2,c_1,\nu)=-\sqrt{c_2}+\gamma_1\sqrt{\alpha}\sqrt{1-2c_1+c_2+\sigma^2}
-\sqrt{c_2I}-\gamma_1 r-\nu c_1\sqrt{\beta}, \label{eq:clup5a2}
\end{equation}
Paralleling further the machineries of \cite{Stojnicclupint19,Stojniccluplargesc20} we consider the stationary points of the above function. Along the same lines, we first look at the derivatives with respect to $c_1$ and $c_2$ and obtain
\begin{equation}\label{eq:derc1cclupg1}
  \frac{d\xi_{RD,\gamma_1}(\alpha,\sigma;c_2,c_1,\gamma,\nu)}{d c_1}=
  -\frac{\gamma_1\sqrt{\alpha}}{\sqrt{1-2c_1+c_2+\sigma^2}}-\nu\sqrt{\beta}=0.
\end{equation}
From (\ref{eq:derc1cclupg1}) one easily finds
\begin{equation}\label{eq:derc1cclupg2}
 \nu\sqrt{\beta} = -\frac{\gamma_1\sqrt{\alpha}}{\sqrt{1-2c_1+c_2+\sigma^2}}.
\end{equation}
Computing the derivative with respect to $c_2$ we also have
\begin{equation}\label{eq:derc2cclupg1}
  \frac{d\xi_{RD,\gamma_1}(\alpha,\sigma;c_2,c_1,\nu)}{d c_2}=
  -\frac{1}{2\sqrt{c_2}}+\frac{\gamma_1\sqrt{\alpha}}{2\sqrt{1-2c_1+c_2+\sigma^2}}-\frac{\sqrt{I}}{2\sqrt{c_2}}=-\frac{1+\sqrt{I}}{2\sqrt{c_2}}-\frac{\nu\sqrt{\beta}}{2}=0,
\end{equation}
and
\begin{equation}\label{eq:derc2cclupg2}
  c_2=\left (-\frac{1+\sqrt{I}}{\nu\sqrt{\beta}}\right )^2.
\end{equation}
From (\ref{eq:derc1cclupg2}) we can then find
\begin{equation}\label{eq:derc2cclupg2a}
  c_1=\frac{1}{2}(1+c_2+\sigma^2-(\gamma_1\sqrt{\alpha}/\nu/\sqrt{\beta})^2).
\end{equation}
Now, we can switch to the derivatives with respect to $\gamma_1$ and $\nu$. We start with the derivative with respect to $\gamma_1$
\begin{equation}\label{eq:dergamma1cclupg1}
  \frac{d\xi_{RD,\gamma_1}(\alpha,\sigma;c_2,c_1,\nu)}{d\gamma_1}=\sqrt{\alpha}\sqrt{1-2c_1+c_2+\sigma^2}-r-\frac{\sqrt{c_2}}{2\sqrt{I}}\frac{dI}{d\gamma_1}.
\end{equation}
To compute the $\frac{dI}{d\gamma_1}$ we rely on (\ref{eq:clup17a}) and ultimately (\ref{eq:clup16}). In other words, we first have
\begin{equation}\label{eq:dergamma1cclupg1a}
 \frac{dI}{d\gamma_1}=\beta\left (\frac{dI_{11}(\gamma_1,\nu,c_{\ell_1})}{d\gamma_1}+\frac{dI_{12}(\gamma_1,\nu,c_{\ell_1})}{d\gamma_1}\right )
 +(1-\beta)\left (\frac{dI_{11}(\gamma_1,0,c_{\ell_1})}{d\gamma_1}+\frac{dI_{12}(\gamma_1,0,c_{\ell_1})}{d\gamma_1}\right ).
\end{equation}
Then we also have
\begin{eqnarray}\label{eq:dergamma1cclupg1b}
 \frac{dI_{11}(\gamma_1,\nu,c_{\ell_1})}{d\gamma_1} & = & \gamma_1 \erf((\nu - c_{\ell_1})/(\sqrt{2}\gamma_1))+\gamma_1 \nonumber \\
 \frac{dI_{12}(\gamma_1,\nu,c_{\ell_1})}{d\gamma_1} & = & -\gamma_1 \erf((\nu + c_{\ell_1})/(\sqrt{2}\gamma_1))+\gamma_1,
\end{eqnarray}
and analogously
\begin{eqnarray}\label{eq:dergamma1cclupg1c}
 \frac{dI_{11}(\gamma_1,0,c_{\ell_1})}{d\gamma_1} & = & \gamma_1 \erf(- c_{\ell_1}/(\sqrt{2}\gamma_1))+\gamma_1 \nonumber \\
 \frac{dI_{12}(\gamma_1,0,c_{\ell_1})}{d\gamma_1} & = & -\gamma_1 \erf( c_{\ell_1}/(\sqrt{2}\gamma_1))+\gamma_1.
\end{eqnarray}
A combination of (\ref{eq:dergamma1cclupg1})-(\ref{eq:dergamma1cclupg1c}) is then sufficient to compute the derivative with respect to $\gamma_1$. For the derivative with respect to $\nu$ we have
\begin{equation}\label{eq:dernucclupg1}
  \frac{d\xi_{RD,\gamma_1}(\alpha,\sigma;c_2,c_1,\nu)}{d\nu}=-\frac{\sqrt{c_2}}{2\sqrt{I}}\frac{dI}{d\nu}-c_1\sqrt{\beta}.
\end{equation}
To compute the $\frac{dI}{d\nu}$ we again rely on (\ref{eq:clup17a}) and ultimately (\ref{eq:clup16}). In other words, we again first have
\begin{equation}\label{eq:dernucclupg1a}
 \frac{dI}{d\nu}=\beta\left (\frac{dI_{11}(\gamma_1,\nu,c_{\ell_1})}{d\nu}+\frac{dI_{12}(\gamma_1,\nu,c_{\ell_1})}{d\nu}\right ),
\end{equation}
and
\begin{eqnarray}\label{eq:dernucclupg1b}
 \frac{dI_{11}(\gamma_1,\nu,c_{\ell_1})}{d\nu} & = & \sqrt{2/\pi}\gamma_1\exp(-(\nu - c_{\ell_1})^2/(2 \gamma_1^2)) + (\nu - c_{\ell_1})\erf((\nu - c_{\ell_1})/(\sqrt{2}\gamma_1))+(\nu-c_{\ell_1}) \nonumber \\
 \frac{dI_{12}(\gamma_1,\nu,c_{\ell_1})}{d\nu} & = & \nu+c_{\ell_1}-(\sqrt{2/\pi}\gamma_1 \exp(-(\nu + c_{\ell_1})^2/(2\gamma_1^2)) +  (\nu + c_{\ell_1}) \erf((\nu + c_{\ell_1})/(\sqrt{2}\gamma_1))).
\end{eqnarray}
The following theorem summarizes all of the above considerations.
\begin{theorem}(\bl{\textbf{CLuP -- Random dual}} -- stationary points)
 Consider $\xi_{RD,\gamma_1}(\alpha,\sigma;c_2,c_1,\nu)$ from (\ref{eq:clup5a2}). Its stationary points satisfy the following system of equations:
\begin{eqnarray}
  \frac{d\xi_{RD,\gamma_1}(\alpha,\sigma;c_2,c_1,\nu)}{d\nu} & = & -\frac{\sqrt{c_2}}{2\sqrt{I}}\beta\left (\frac{dI_{11}(\gamma_1,\nu,c_{\ell_1})}{d\nu}+\frac{dI_{12}(\gamma_1,\nu,c_{\ell_1})}{d\nu}\right )-c_1\sqrt{\beta}=0\nonumber \\
      \frac{d\xi_{RD,\gamma_1}(\alpha,\sigma;c_2,c_1,\nu)}{d\gamma_1} & = & \sqrt{\alpha}\sqrt{1-2c_1+c_2+\sigma^2}-r-\frac{\sqrt{c_2}}{2\sqrt{I}}\frac{dI}{d\gamma_1}=0 \nonumber \\
  c_2& = &\left (-\frac{1+\sqrt{I}}{\nu\sqrt{\beta}}\right )^2\nonumber \\
        c_1 & = & \frac{1}{2}(1+c_2+\sigma^2-(\gamma_1\sqrt{\alpha}/\nu/\sqrt{\beta})^2),
\end{eqnarray}
where $\frac{dI_{11}(\gamma_1,\nu,c_{\ell_1})}{d\nu}$ and $\frac{dI_{12}(\gamma_1,\nu,c_{\ell_1})}{d\nu}$ are as given in (\ref{eq:dernucclupg1b}), $I$ is given through (\ref{eq:clup16}) and (\ref{eq:clup17a}), and $\frac{dI}{d\gamma_1}$ is given through (\ref{eq:dergamma1cclupg1a})-(\ref{eq:dergamma1cclupg1c}).
\label{thm:cluprd1}
\end{theorem}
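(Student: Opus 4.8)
The plan is to prove the theorem by direct differentiation: the four stated equations are precisely the first-order stationarity conditions obtained by setting the four partial derivatives of $\xi_{RD,\gamma_1}(\alpha,\sigma;c_2,c_1,\nu)$ from (\ref{eq:clup5a2}) to zero. Since $\xi_{RD,\gamma_1}$ is an explicit, smooth function of $(c_1,c_2,\gamma_1,\nu)$ on the relevant domain, there is no subtlety about the existence of the derivatives; the work lies entirely in computing them and organizing the results into the compact form stated.

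First I would differentiate with respect to $c_1$. Observing that $c_1$ enters only through $\gamma_1\sqrt{\alpha}\sqrt{1-2c_1+c_2+\sigma^2}$ and the linear term $-\nu c_1\sqrt{\beta}$ (the integral $I$ being independent of $c_1$), the stationarity condition collapses to $\nu\sqrt{\beta}=-\gamma_1\sqrt{\alpha}/\sqrt{1-2c_1+c_2+\sigma^2}$, as in (\ref{eq:derc1cclupg2}). Next I would differentiate with respect to $c_2$; here $c_2$ appears in $-\sqrt{c_2}$, in the square-root coupling term, and in $-\sqrt{c_2 I}$. After substituting the $c_1$-relation to eliminate the square-root contribution, the condition simplifies to $-\frac{1+\sqrt{I}}{2\sqrt{c_2}}-\frac{\nu\sqrt{\beta}}{2}=0$, which solves to the third equation $c_2=\left(-\frac{1+\sqrt{I}}{\nu\sqrt{\beta}}\right)^2$. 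Re-inserting this back into the $c_1$-relation yields the fourth equation $c_1=\frac{1}{2}(1+c_2+\sigma^2-(\gamma_1\sqrt{\alpha}/\nu/\sqrt{\beta})^2)$.

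The remaining two equations, the derivatives with respect to $\gamma_1$ and $\nu$, are where the real computation sits. Both require the derivatives of $I$, which by (\ref{eq:clup17a}) decomposes as a $\beta$-weighted combination of $I_{11}$ and $I_{12}$ evaluated at $\nu$ and at $0$. Each of $I_{11},I_{12}$ in (\ref{eq:clup16}) is a product of an $\erfc$ factor with a quadratic polynomial, together with an $\exp$ term, so differentiating produces several contributions. I expect the main obstacle to be the chain-rule bookkeeping: when one differentiates the $\erfc$ prefactor, its derivative generates a Gaussian that must cancel exactly against the derivative of the explicit $\exp$ term, so that the surviving expressions reduce to the clean forms $\frac{dI_{11}}{d\gamma_1}=\gamma_1\erf((\nu-c_{\ell_1})/(\sqrt{2}\gamma_1))+\gamma_1$, and the companions in (\ref{eq:dergamma1cclupg1b})-(\ref{eq:dergamma1cclupg1c}), together with the corresponding $\nu$-derivatives in (\ref{eq:dernucclupg1b}). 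Verifying these cancellations is the crux; once done, assembling them through (\ref{eq:dergamma1cclupg1a}) and (\ref{eq:dernucclupg1a}) and substituting into $\frac{d\xi_{RD,\gamma_1}}{d\gamma_1}$ and $\frac{d\xi_{RD,\gamma_1}}{d\nu}$ delivers the first two equations of the theorem, completing the system.
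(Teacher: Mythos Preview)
Your proposal is correct and follows essentially the same approach as the paper: the paper's proof simply refers back to the in-text computations (\ref{eq:derc1cclupg1})--(\ref{eq:dernucclupg1b}), which are exactly the four partial derivatives you describe, organized in the same order and with the same substitutions. The cancellation of Gaussian terms when differentiating $I_{11},I_{12}$ that you flag as the crux is indeed the only nontrivial calculation, and the paper records its outcome in (\ref{eq:dergamma1cclupg1b})--(\ref{eq:dergamma1cclupg1c}) and (\ref{eq:dernucclupg1b}) without further comment.
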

\begin{proof}
Follows through the above considerations and a long line of RDT related results presented in \cite{StojnicCSetam09,StojnicISIT2010binary,StojnicDiscPercp13,StojnicGenLasso10,StojnicGenSocp10,StojnicPrDepSocp10,StojnicRegRndDlt10,Stojnicclupint19,Stojnicclupcmpl19,Stojnicclupplt19,Stojniccluplargesc20}.
\end{proof}
\textbf{Remark} (\emph{\underline{\bl{\textbf{deterministic versus distributional priors}}}}): We should emphasize that for the concreteness and easiness of writing and following of this introductory presentation we assumed the above mentioned scenario where the elements of $\x_{sol}$ take so say concrete deterministic values (typically called the worst-case within the LASSO/SOCP context). However, the entire machinery and all the derivations presented above continue to hold with minimal changes if one assumes an arbitrary prior distribution for elements of $\x_{sol}$. If say each element of $\x_{sol}$ is a random variable $x_p$ with distribution $p(x_p)$ the key changes are: 1) in (\ref{eq:clup14}) one has only the first option for any $i$ with $\nu$ being replaced by $\nu x_p$; 2) $c_1\sqrt{\beta}$ in (\ref{eq:clup13}) is potentially scaled depending on how one wants to impose the sparsity constraint; and 3) everything is ultimately conditioned on $p(x_p)$. We will in some of our separate papers present the results that one obtains after repeating all of the above calculations with these minor changes and conditioning on $p(x_p)$ for several interesting prior distributions.

Recalling on (\ref{eq:LSclup6}) and the importance of the critical parameters $r_{sc}$, $\hat{c}_2$ , $\hat{\gamma}_1$ and $\hat{c}_{\ell_1}$ emphasized at the very beginning of this section one now can recognize the relevance of the above theorem. Namely, similarly to what was done in \cite{Stojniccluplargesc20,Stojniccluprephased20}, the above theorem provides the way to determine the values of these parameters so that the large-scale machinery defined in (\ref{eq:LSclup6}) can indeed be practically utilized. Below we will discuss the results that one can finally obtain through all of the above considerations. However, before doing that we briefly present a few results that are in a way connected to what we presented above.

\subsection{Adjusted LASSO/SOCP}
\label{sec:adjlassocosp}

A simple visual comparison of (\ref{eq:clup2}) and (\ref{eq:ml1socp}) is enough to conclude that there is a very strong similarity between the two optimization concepts. One would then expect that the above machinery can also be repeated starting with (\ref{eq:ml1socp}) instead of (\ref{eq:clup2}). That is of course true, with the repetition going pretty much along the same lines of what we presented above with just a tiny difference in the ending result. We skip redoing all the steps and instead summarize them through the ending results in the following LASSO/SOCP equivalent of the above Theorem \ref{thm:cluprd1}.
\begin{theorem}(\bl{\textbf{Adjusted LASSO/SOCP -- Random dual}} -- stationary points)
Similarly to (\ref{eq:clup5a2}), let
\begin{equation}
\xi_{RD,\gamma_1}^{(socp)}(\alpha,\sigma;c_2,c_1,\nu)=\gamma_1\sqrt{\alpha}\sqrt{1-2c_1+c_2+\sigma^2}
-\sqrt{c_2I}-\gamma_1 r-\nu c_1\sqrt{\beta}. \label{eq:socp5a2}
\end{equation}
Its stationary points satisfy the following system of equations:
\begin{eqnarray}
  \frac{d\xi_{RD,\gamma_1}(\alpha,\sigma;c_2,c_1,\nu)}{d\nu} & = & -\frac{\sqrt{c_2}}{2\sqrt{I}}\beta\left (\frac{dI_{11}(\gamma_1,\nu,c_{\ell_1})}{d\nu}+\frac{dI_{12}(\gamma_1,\nu,c_{\ell_1})}{d\nu}\right )-c_1\sqrt{\beta}=0\nonumber \\
      \frac{d\xi_{RD,\gamma_1}(\alpha,\sigma;c_2,c_1,\nu)}{d\gamma_1} & = & \sqrt{\alpha}\sqrt{1-2c_1+c_2+\sigma^2}-r-\frac{\sqrt{c_2}}{2\sqrt{I}}\frac{dI}{d\gamma_1}=0 \nonumber \\
  c_2& = &\left (-\frac{\sqrt{I}}{\nu\sqrt{\beta}}\right )^2\nonumber \\
        c_1 & = & \frac{1}{2}(1+c_2+\sigma^2-(\gamma_1\sqrt{\alpha}/\nu/\sqrt{\beta})^2),
\end{eqnarray}
where $\frac{dI_{11}(\gamma_1,\nu,c_{\ell_1})}{d\nu}$ and $\frac{dI_{12}(\gamma_1,\nu,c_{\ell_1})}{d\nu}$ are as given in (\ref{eq:dernucclupg1b}), $I$ is given through (\ref{eq:clup16}) and (\ref{eq:clup17a}), and $\frac{dI}{d\gamma_1}$ is given through (\ref{eq:dergamma1cclupg1a})-(\ref{eq:dergamma1cclupg1c}).
\label{thm:adjlasspsocpthm}
\end{theorem}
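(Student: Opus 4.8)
The plan is to observe that the SOCP program (\ref{eq:ml1socp}) differs from the CLuP optimization (\ref{eq:clup2}) only by the absence of the $-\|\x\|_2$ term in the objective, and to push the entire Random Duality Theory machinery that produced Theorem \ref{thm:cluprd1} through once more with this single modification tracked. First I would repeat the reduction from (\ref{eq:clup4}) onward verbatim: introduce the concentrating parameters $c_1=(\x_{sol})^T\x$ and $c_2=\|\x\|_2^2$, form the Lagrangian, pass to the random dual, and solve the inner Gaussian optimization. Since the term $-\|\x\|_2=-\sqrt{c_2}$ neither couples to $\x$ through $A$ nor enters the box-constrained subproblem $f_{box}$, the quantities $I$, $I_{11}$, $I_{12}$ in (\ref{eq:clup16})--(\ref{eq:clup17a}) and the optimizing $\x_i$ are identical to the CLuP case. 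The only bookkeeping change is that the aggregated objective loses its $-\sqrt{c_2}$ summand, yielding exactly $\xi_{RD,\gamma_1}^{(socp)}$ in (\ref{eq:socp5a2}).

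The second step is to differentiate $\xi_{RD,\gamma_1}^{(socp)}$ and locate its stationary points. Because the dropped term $-\sqrt{c_2}$ depends on none of $\gamma_1$, $\nu$, or $c_1$, the derivatives with respect to those three variables coincide term-by-term with (\ref{eq:dergamma1cclupg1})--(\ref{eq:dernucclupg1b}); in particular the $\nu$- and $\gamma_1$-stationarity equations and the relation $c_1=\frac{1}{2}(1+c_2+\sigma^2-(\gamma_1\sqrt{\alpha}/\nu/\sqrt{\beta})^2)$ are reproduced unchanged. The one genuinely different computation is the derivative in $c_2$: dropping $-\sqrt{c_2}$ removes precisely the $-\frac{1}{2\sqrt{c_2}}$ contribution from (\ref{eq:derc2cclupg1}), so setting $d\xi_{RD,\gamma_1}^{(socp)}/dc_2=0$ and substituting the $c_1$-stationarity relation (\ref{eq:derc1cclupg2}) gives $-\frac{\sqrt{I}}{2\sqrt{c_2}}=\frac{\nu\sqrt{\beta}}{2}$, i.e. $c_2=\left(-\frac{\sqrt{I}}{\nu\sqrt{\beta}}\right)^2$. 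This is the sole place where the stated system departs from Theorem \ref{thm:cluprd1}, trading the factor $1+\sqrt{I}$ for $\sqrt{I}$.

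I do not expect a genuine obstacle here, since the derivation is a transparent specialization of the CLuP argument. The only point demanding care is confirming that $I$ carries no hidden $c_2$-dependence — it is built from $I_{11}$, $I_{12}$, which by (\ref{eq:clup16}) are functions of $\gamma_1,\nu,c_{\ell_1}$ alone — so that the $c_2$-derivative reduces to the clean two-term form above; were $I$ to depend on $c_2$, an extra chain-rule term would survive and the simple replacement $1+\sqrt{I}\mapsto\sqrt{I}$ would fail. Granting that, the theorem follows by invoking \cite{Stojnicclupint19}'s Theorem 1 to justify the random-dual reduction, exactly as in the proof of Theorem \ref{thm:cluprd1}.
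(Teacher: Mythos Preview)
Your proposal is correct and matches the paper's own argument, which simply states that the result ``follows in a way completely analogous to the way used above to derive Theorem~\ref{thm:cluprd1}.'' You have in fact spelled out more detail than the paper does, correctly isolating the single point of departure (the $c_2$-derivative losing its $-\tfrac{1}{2\sqrt{c_2}}$ contribution, hence $1+\sqrt{I}\mapsto\sqrt{I}$) and verifying that $I$ carries no hidden $c_2$-dependence.
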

\begin{proof}
Follows in a way completely analogous to the way used above to derive Theorem \ref{thm:cluprd1}.
\end{proof}
A bit more experienced reader will recognize that this theorem is basically an exact analogue to the key results obtained in \cite{StojnicPrDepSocp10}. Moreover, Theorem \ref{thm:lassosocp}, which relates to what is called plain LASSO or SOCP is its a limiting worst-case variant obtained for $r=r_{socp}$ and is of course an exact analogue to some of the key results obtained in \cite{StojnicGenSocp10}. In other words, the relation between Theorems \ref{thm:lassosocp} and \ref{thm:adjlasspsocpthm} is exactly the same as the relation between some of the key results of \cite{StojnicGenSocp10} and \cite{StojnicPrDepSocp10}.

\subsection{Ideal ML}
\label{sec:adjlassocosp}

In this section we recall on some well known facts regarding the so-called ideal ML performance within the sparse linear regression context. Such a performance is related to an artificial algorithm that would have \emph{a priori} available knowledge where the support of $\x_{sol}$ (positions of its nonzero elements) is located. For the simplicity and without a loss of generality, we will, as earlier, assume that the first $k$ locations of $\x_{sol}$ are different from zero. We will call the part of matrix $A$ that contains the first $k$ columns $A_{(k)}$. Given all of this the original ML problem from (\ref{eq:ml1}) becomes
\begin{eqnarray}\label{eq:idealml1}
\hat{\x}^{(iml)}=\min_{\x\in\mR^k}\|\y-A_{(k)}\x\|_2.
\end{eqnarray}
The solution to this problem is trivial and is given via the so-called pseudo-inverse in the following way
\begin{eqnarray}\label{eq:idealml2}
\hat{\x}^{(iml)}=(A_{(k)}^TA_{(k)})^{-1}A_{(k)}^T\y.
\end{eqnarray}
Moreover, from (\ref{eq:linsys1}) we recall
\begin{eqnarray}\label{eq:idealml3}
\y=A_{(k)}\x_{sol}+\sigma\v.
\end{eqnarray}
Combining (\ref{eq:idealml2}) and (\ref{eq:idealml3}) we further also have
\begin{eqnarray}\label{eq:idealml4}
\hat{\x}^{(iml)}=(A_{(k)}^TA_{(k)})^{-1}A_{(k)}^T(A_{(k)}\x_{sol}+\sigma\v)=\x_{sol}+(A_{(k)}^TA_{(k)})^{-1}A_{(k)}^T\sigma\v.
\end{eqnarray}
From (\ref{eq:idealml4}) we then have for the so-called ideal ML MSE
\begin{eqnarray}\label{eq:idealml5}
\delta_{iml}=\|\hat{\x}^{(iml)}-\x_{sol}\|_2=\sigma\|(A_{(k)}^TA_{(k)})^{-1}A_{(k)}^T\v\|_2.
\end{eqnarray}
Relying on the SVD decomposition of $A_{(k)}=U \Sigma V^T$ ($U^TU$ and $V^TV$ are the identity matrices and $\Sigma$ is the diagonal matrix with its diagonal elements being the so-called singular values of matrix $A_{(k)}$) we can transform (\ref{eq:idealml5}) into
\begin{equation}\label{eq:idealml6}
\delta_{iml}=\sigma\|(A_{(k)}^TA_{(k)})^{-1}A_{(k)}^T\v\|_2=\sigma\|(V\Sigma^2V^T)^{-1}(U \Sigma V^T)^T\v\|_2
=\sigma\|V\Sigma^{-2}V^TV \Sigma U^T\v\|_2=\sigma\|V\Sigma^{-1}U^T\v\|_2. \\
\end{equation}
Given the rotational invariance of $\v$ and independence between $\v$ and $A_{(k)}$ we also have
\begin{equation}\label{eq:idealml7}
\delta_{iml}=\sigma\|V\Sigma^{-1}U\v\|_2=\sigma\sum_{i=1}^k\frac{\v_i^2}{\Sigma_i^2}.
\end{equation}
Using the law of large numbers, independence between $\v$ and $\Sigma$, and the Marchenko-Pastur distribution of the eigenvalues of Wigner matrices we obtain
\begin{equation}\label{eq:idealml8}
\lim_{n\rightarrow\infty}\delta_{iml}= \sigma\sqrt{\frac{\beta}{2\pi\alpha}\int_{\lambda_-}^{\lambda_+}\frac{\sqrt{(\lambda_+-x)(x-\lambda_-)}}{\lambda_{sp} x^2}dx},
\end{equation}
where
\begin{equation}\label{eq:idealml9}
\lambda_{sp}=\frac{\beta}{\alpha}, \qquad \lambda_+=(1+\sqrt{\lambda_{sp}})^2, \qquad \lambda_-=(1-\sqrt{\lambda_{sp}})^2.
\end{equation}
We will use these results below as the absolute benchmark for the ML sparse regression.

\subsection{Numerical results}
\label{sec:nummlspreg}

In this section we will present the numerical results that can be obtained through the mechanisms discussed above. They will relate to both, the theoretical predictions and the simulations. We start below with some of the theoretical predictions.

\subsubsection{Theoretical predictions}
\label{sec:numtheory}

As mentioned earlier, the results of Theorem \ref{thm:cluprd1} can be used to characterize the performance of the introduced CLuP mechanism. One though has to be careful how to interpret these results. Before discussing all these intricacies we should emphasize that there are many parameters that can be used to characterize the CLuP's performance. For the concreteness and clarity, we here focus on the MSE (we will discuss quite a few other ones on various occasions in separate papers and often even within a different context where they may be of more interest).

Now, the most natural way to use Theorem \ref{thm:cluprd1} would be roughly the following. One can choose a pair $(r_{sc},c_{\ell_1})$ and for such a pair find the corresponding quantities that Theorem \ref{thm:cluprd1} provides. Among them would be $c_2$ and $c_1$. Using such $c_2$ and $c_1$ one can then compute $\delta$. Redoing this procedure until one finds the pair $(r_{sc},c_{\ell_1})$ that gives the minimal $\delta$ would lead to the \textbf{very ultimate} CLuP theoretically predictable performance. In mathematical terminology, one has
\begin{equation}\label{eq:numtheory1}
\lim_{n\rightarrow\infty}\delta_{clup}^{(vult)}=\lim_{n\rightarrow\infty}\min_{r_{sc},c_{\ell_1}}\|\x_{sol}-\x\|_2=\min_{r_{sc},c_{\ell_1}}\sqrt{1-2c_1+c_2},
\end{equation}
with $c_2$ and $c_1$ being functions of $(r_{sc},c_{\ell_1})$ as stated in Theorem \ref{thm:cluprd1}. While this type of choice for $(r_{sc},c_{\ell_1})$ goes for the minimal MSE, one has to be careful as to how easy/difficult would be for the CLuP to actually achieve the corresponding stationary point where such minimal MSE happens. A bit more careful method would then be to not necessarily look for the minimal MSE but rather for the minimal MSE where getting to the corresponding stationary point is not jeopardized by being stuck in other stationary points. We refer to such performance as the \textbf{ultimate} CLuP performance.

For the concreteness and to ensure the easiness of following, we will choose the same scenario for which we will compare various theoretical predictions and simulated results. We will assume moderately under-sampled regime with $\alpha=0.5$. We will also assume $\beta=0.1625$. It is not that difficult to double check that this corresponds to the scenario from Theorem \ref{thm:lassosocp} where $\alpha_w=0.45$. Moreover, a little bit of familiarity with the LASSO and SOCP theoretical predictions from \cite{StojnicGenLasso10,StojnicGenSocp10} says that this choice further corresponds to the scenario that is fairly close to the so-called phase-transition regime where the plain LASSO or SOCP have the worst-case residual MSE
\begin{equation}\label{eq:numtheory2}
\lim_{n\rightarrow\infty}\delta_{lasso}=\lim_{n\rightarrow\infty}\delta_{socp}=\sigma\sqrt{\frac{\alpha_w}{\alpha-\alpha_w}}=3\sigma.
\end{equation}
Also, for this very same scenario we have from (\ref{eq:idealml8}) and (\ref{eq:idealml9})
\begin{equation}\label{eq:numtheory3}
\lim_{n\rightarrow\infty}\delta_{iml}= \sigma\sqrt{\frac{\beta}{2\pi\alpha}\int_{\lambda_-}^{\lambda_+}\frac{\sqrt{(\lambda_+-x)(x-\lambda_-)}}{\lambda_{sp} x^2}dx}=0.6939\sigma.
\end{equation}

For the parameters that we set above, we in Figure \ref{fig:mmselassoclupverybest} show the results that can be obtained based on Theorems \ref{thm:lassosocp} and \ref{thm:cluprd1} and (\ref{eq:idealml8}) and (\ref{eq:idealml9}). As mentioned above, our focus is on the MSE (mean square error) $\delta$ as a performance measure (MSE is of course not only our choice here, but typically the most often used quantity to characterize/measure the quality of the underlying regression mechanism). As can be seen from the figure, the very ultimate CLuP performance substantially improves over the LASSO from (\ref{eq:ml1lasso}) and the SOCP from (\ref{eq:ml1socp}). Moreover, it also comes fairly close to the so-called ideal ML performance in a solid range of SNRs. The ideal ML performance is typically thought of as a very hard to break barrier as it relies on perfect knowledge of the location of nonzero components of $\x$ (when it comes to the ML concept it is actually unbreakable). As mentioned earlier, the results of Theorem \ref{thm:lassosocp} are essentially the limiting worst-case version of the results one would get through Theorem \ref{thm:adjlasspsocpthm}. We should also add that in the range of $\sigma$ shown in Figure \ref{fig:mmselassoclupverybest} the adjustment to the results of Theorem \ref{thm:lassosocp} that Theorem \ref{thm:adjlasspsocpthm} provides are very marginal and basically invisible (they start showing effects in the lower SNR regimes for $1/\sigma$ below $6$). In Table \ref{tab:tabclup1}, we show in parallel the numerical values that correspond to the data shown in Figure \ref{fig:mmselassoclupverybest}.
\begin{figure}[htb]
\centering
\centerline{\epsfig{figure=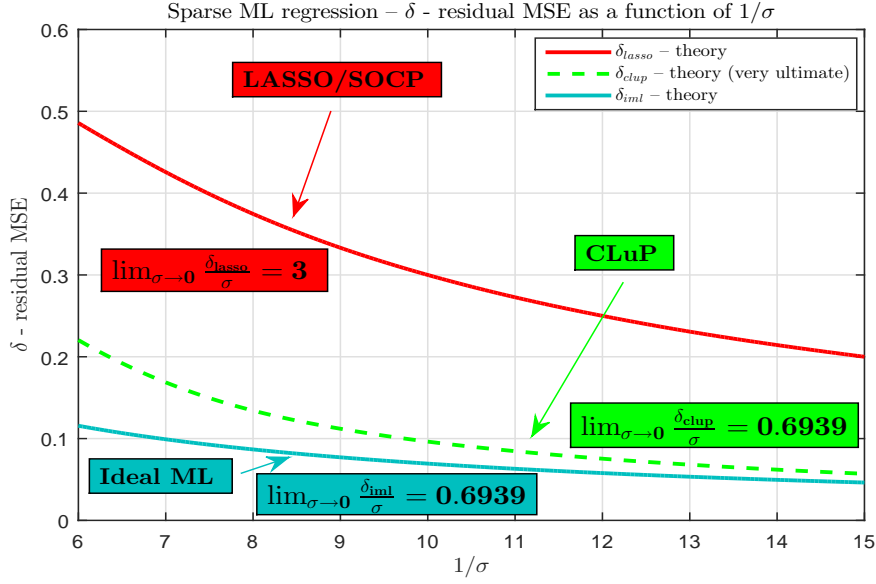,width=13.5cm,height=8cm}}
\caption{Comparison of $\delta$ as a function of $1/\sigma$; $\alpha=0.5$; $\beta=0.1625$; very ultimate level ($n\rightarrow \infty$)}
\label{fig:mmselassoclupverybest}
\end{figure}

\begin{table}[h]
\caption{CLuP -- \textbf{theoretical} values for  $c_{\ell_1}  $, $r_{sc}  $, $c_2$,  $c_1$, $\xi_{RD}$, and $\delta$ (very ultimate)} \vspace{.1in}
\hspace{-0in}\centering
\footnotesize{
\begin{tabular}{||c||c|c||c|c||c|c||c||}\hline\hline
$ 1/\sigma $ & $c_{\ell_1}$ & $r_{sc}  $  & $c_2$ &  $c_1$ & $\xi_{RD}$ & $\delta$ & $\delta/\sigma$\\ \hline\hline
$\mathbf{6 }$ & $\mathbf{3.0661 }$ & $\mathbf{1.5568 }$ & $\mathbf{0.9533 }$ & $\mathbf{0.9524 }$ & $\mathbf{0.2846 }$ & $\mathbf{0.2204 }$ & $\mathbf{1.3224 }$ \\ \hline
$\mathbf{7 }$ & $\mathbf{2.8162 }$ & $\mathbf{2.0436 }$ & $\mathbf{0.9715 }$ & $\mathbf{0.9715 }$ & $\mathbf{0.1496 }$ & $\mathbf{0.1687 }$ & $\mathbf{1.1809 }$ \\ \hline
$\mathbf{8 }$ & $\mathbf{2.7160 }$ & $\mathbf{2.1558 }$ & $\mathbf{0.9820 }$ & $\mathbf{0.9820 }$ & $\mathbf{0.1039 }$ & $\mathbf{0.1342 }$ & $\mathbf{1.0736 }$ \\ \hline
$\mathbf{9 }$ & $\mathbf{2.6615 }$ & $\mathbf{2.2272 }$ & $\mathbf{0.9875 }$ & $\mathbf{0.9875 }$ & $\mathbf{0.0792 }$ & $\mathbf{0.1120 }$ & $\mathbf{1.0080 }$ \\ \hline
$\mathbf{10 }$ & $\mathbf{2.6269 }$ & $\mathbf{2.2777 }$ & $\mathbf{0.9907 }$ & $\mathbf{0.9907 }$ & $\mathbf{0.0636 }$ & $\mathbf{0.0963 }$ & $\mathbf{0.9630 }$ \\ \hline
$\mathbf{11 }$ & $\mathbf{2.6033 }$ & $\mathbf{2.3155 }$ & $\mathbf{0.9928 }$ & $\mathbf{0.9928 }$ & $\mathbf{0.0530 }$ & $\mathbf{0.0846 }$ & $\mathbf{0.9306 }$ \\ \hline
$\mathbf{12 }$ & $\mathbf{2.5857 }$ & $\mathbf{2.3448 }$ & $\mathbf{0.9943 }$ & $\mathbf{0.9943 }$ & $\mathbf{0.0452 }$ & $\mathbf{0.0754 }$ & $\mathbf{0.9048 }$ \\ \hline
$\mathbf{13 }$ & $\mathbf{2.5728 }$ & $\mathbf{2.3684 }$ & $\mathbf{0.9954 }$ & $\mathbf{0.9954 }$ & $\mathbf{0.0394 }$ & $\mathbf{0.0681 }$ & $\mathbf{0.8853 }$ \\ \hline
$\mathbf{14 }$ & $\mathbf{2.5624 }$ & $\mathbf{2.3877 }$ & $\mathbf{0.9961 }$ & $\mathbf{0.9961 }$ & $\mathbf{0.0348 }$ & $\mathbf{0.0620 }$ & $\mathbf{0.8680 }$ \\ \hline
$\mathbf{15 }$ & $\mathbf{2.5541 }$ & $\mathbf{2.4039 }$ & $\mathbf{0.9967 }$ & $\mathbf{0.9967 }$ & $\mathbf{0.0312 }$ & $\mathbf{0.0570 }$ & $\mathbf{0.8550 }$ \\ \hline\hline
$\mathbf{100 }$ & $\mathbf{2.4871 }$ & $\mathbf{2.5699 }$ & $\mathbf{0.9999 }$ & $\mathbf{0.9999 }$ & $\mathbf{0.0026 }$ & $\mathbf{0.0072 }$ & $\mathbf{0.7171 }$ \\ \hline\hline
$\mathbf{\infty }$ (limit)& $\mathbf{2.4807 }$ & $\mathbf{2.5981 }$ & $\mathbf{1}$ & $\mathbf{1}$ & $\mathbf{0 }$ & $\mathbf{0 }$ & $\mathbf{0.6939 }$ \\ \hline
\hline
\end{tabular}}
\label{tab:tabclup1}
\end{table}

\subsubsection{Achieving the ideal ML}
\label{sec:achievingidealml}

Looking a bit more carefully at the results presented in Table \ref{tab:tabclup1} one observes that in addition to the values that correspond to the data shown in Figure \ref{fig:mmselassoclupverybest}, it also contains (in the last two rows) limiting values as $\sigma$ approaches zero. From the very last element in the last row one also observes that it matches the value given in (\ref{eq:numtheory3}). This actually means that the CLuP is ultimately capable of achieving the ideal ML performance. Below we sketch the arguments that confirm that the limiting MSE of the above discussed CLuP indeed matches the corresponding one of the ideal ML given in (\ref{eq:numtheory3}).

There are many ways how this can be achieved. We choose a particular way that is not necessarily the fastest but highlights the limiting behavior of the entire machinery and its all relevant performance characterizing parameters discussed above. We start with several simple observations. Let
\begin{eqnarray}
  c_{1s} &\triangleq & \frac{1-c_{1}}{\sigma^2} \nonumber \\
  c_{2s} &\triangleq & \frac{1-c_{2}}{\sigma^2} \nonumber \\
  \nu_{s} &\triangleq & \nu\sigma \nonumber \\
  c_{\ell_1s} &\triangleq & \frac{c_{\ell_1}\sqrt{\beta}-1}{\sigma} .\label{eq:achidml1}
\end{eqnarray}
From (\ref{eq:dernucclupg1a}) and (\ref{eq:dernucclupg1b}) we have as $\sigma\rightarrow 0$
\begin{equation}\label{eq:achidml2}
  \frac{dI}{d\nu}\longrightarrow 2\beta(\nu+c_{\ell_1}).
\end{equation}
From (\ref{eq:dernucclupg1}) we then also have
\begin{eqnarray}\label{eq:achidml3}
 & &  \frac{d\xi_{RD,\gamma_1}(\alpha,\sigma;c_2,c_1,\nu)}{d\nu}  =  0\nonumber \\
\Longleftrightarrow & &  -\frac{\sqrt{c_2}}{2\sqrt{I}}\frac{dI}{d\nu}-c_1\sqrt{\beta} =0 \nonumber \\
\Longleftrightarrow & &  \sqrt{I}\longrightarrow -\frac{\sqrt{c_2}(\nu+c_{\ell_1})\sqrt{\beta}}{c_1}.
\end{eqnarray}
Recalling on (\ref{eq:derc2cclupg1}) we can write
\begin{equation}\label{eq:achidml4}
-\frac{1+\sqrt{I}}{2\sqrt{c_2}}-\frac{\nu\sqrt{\beta}}{2}=0 \quad \Longleftrightarrow \quad
\sqrt{c_2}=-\frac{1+\sqrt{I}}{\nu\sqrt{\beta}}.
\end{equation}
Plugging the value for $\sqrt{I}$ from (\ref{eq:achidml3}) into (\ref{eq:achidml4}) gives
\begin{equation}\label{eq:achidml5}
\sqrt{c_2}=-\frac{1+\sqrt{I}}{\nu\sqrt{\beta}}=-\frac{1}{\nu\sqrt{\beta}}-\frac{\sqrt{I}}{\nu\sqrt{\beta}}
 \quad \Longleftrightarrow \quad
 \sqrt{c_2}\longrightarrow-\frac{1}{\nu\sqrt{\beta}}-\frac{-\frac{\sqrt{c_2}(\nu+c_{\ell_1})\sqrt{\beta}}{c_1}}{\nu\sqrt{\beta}},
\end{equation}
and after a few additional transformations
\begin{equation}\label{eq:achidml6}
\nu\sqrt{\beta}\longrightarrow\frac{c_{\ell_1}\sqrt{\beta}\sqrt{c_2}-c_1}{\sqrt{c_2}(c_1-1)}
\longrightarrow\frac{c_{\ell_1}\sqrt{\beta}-1}{c_1-1}.
\end{equation}
Combining (\ref{eq:achidml1}) and (\ref{eq:achidml6}) we arrive at the following
\begin{equation}\label{eq:achidml7}
|\nu_s|\sqrt{\beta}\longrightarrow\frac{c_{\ell_1s}}{c_{1s}}.
\end{equation}
Recalling on (\ref{eq:derc1cclupg2}) we have
\begin{equation}\label{eq:achidml8}
 |\nu|\sqrt{\beta} = \frac{\gamma_1\sqrt{\alpha}}{\sqrt{1-2c_1+c_2+\sigma^2}},
\end{equation}
and combining with (\ref{eq:achidml1}) we also have
\begin{equation}\label{eq:achidml9}
 |\nu_s|\sqrt{\beta} \longrightarrow \frac{\gamma_1\sqrt{\alpha}}{\sqrt{2c_{1s}-c_{2s}+1}}.
\end{equation}
Now, set
\begin{equation}\label{eq:achidml10}
F=(1-\beta)((\gamma_1^2+1/\beta)\erfc(1/\gamma_1/\sqrt{2\beta})-2\gamma_1/\sqrt{\beta}/\sqrt{2\pi}\exp(-(1/2/\beta/\gamma_1^2))),
\end{equation}
and
\begin{equation}\label{eq:achidml11}
D=2\beta\nu c_{\ell_1}+\beta\gamma_1^2+\beta c_{\ell_1}^2+F.
\end{equation}
From (\ref{eq:clup16}) and (\ref{eq:clup17a}) we find
\begin{equation}\label{eq:achidml12}
I\longrightarrow\beta\nu^2+D.
\end{equation}
Moreover, based on (\ref{eq:achidml12}), we can further write for $\sqrt{I}$
\begin{equation}\label{eq:achidml13}
\sqrt{I}\longrightarrow\sqrt{\beta\nu^2}\sqrt{1+\frac{D}{\beta\nu^2}}.
\end{equation}
Taylor expansion gives
\begin{equation}\label{eq:achidml14}
\sqrt{I}\longrightarrow\sqrt{\beta\nu^2}\sqrt{1+\frac{D}{\beta\nu^2}}\longrightarrow\sqrt{\beta\nu^2}\left (1+\frac{D}{2\beta\nu^2}
-\frac{D^2}{8(\beta\nu^2)^2}\right )=\sqrt{\beta\nu^2}+\frac{D}{2\sqrt{\beta\nu^2}}
-\frac{D^2}{8\sqrt{\beta\nu^2}^3}.
\end{equation}
Combining (\ref{eq:achidml11}) and (\ref{eq:achidml14}) we have
\begin{eqnarray}\label{eq:achidml15}
\sqrt{I}& \longrightarrow &\sqrt{\beta\nu^2}+\frac{2\beta\nu c_{\ell_1}+\beta\gamma_1^2+\beta c_{\ell_1}^2+F}{2\sqrt{\beta\nu^2}}
-\frac{(2\beta\nu c_{\ell_1})^2}{8\sqrt{\beta\nu^2}^3}\nonumber \\
& \longrightarrow &\sqrt{\beta\nu^2}-\sqrt{\beta} c_{\ell_1}+\frac{\beta\gamma_1^2+F}{2\sqrt{\beta\nu^2}}
+\frac{\beta c_{\ell_1}^2}{2\sqrt{\beta\nu^2}}
-\frac{(2\beta\nu c_{\ell_1})^2}{8\sqrt{\beta\nu^2}^3}\nonumber \\
& \longrightarrow &\sqrt{\beta\nu^2}-\sqrt{\beta} c_{\ell_1}+\frac{\beta\gamma_1^2+F}{2\sqrt{\beta\nu^2}}.
\end{eqnarray}
Moreover, combining  (\ref{eq:achidml15}) with (\ref{eq:achidml1}) gives
\begin{eqnarray}\label{eq:achidml16}
\sqrt{I}& \longrightarrow & \frac{\sqrt{\beta}|\nu_s|}{\sigma}-1- c_{\ell_1s}\sigma+\frac{\beta\gamma_1^2+F}{2\sqrt{\beta}|\nu_s|}\sigma.
\end{eqnarray}
From (\ref{eq:achidml3}) we also have
\begin{eqnarray}\label{eq:achidml17}
  \sqrt{I}\longrightarrow -\frac{\sqrt{c_2}(\nu+c_{\ell_1})\sqrt{\beta}}{c_1}=-\frac{\sqrt{1-(1-c_2)}(\nu+c_{\ell_1})\sqrt{\beta}}{1-(1-c_1)},
\end{eqnarray}
and
\begin{eqnarray}\label{eq:achidml18}
  \sqrt{I} & \longrightarrow & -\frac{\sqrt{1-(1-c_2)}(\nu+c_{\ell_1})\sqrt{\beta}}{1-(1-c_1)} \nonumber \\
  & = & \frac{\sqrt{1-(1-c_2)}|\nu|\sqrt{\beta}}{1-(1-c_1)}
  -\frac{\sqrt{1-(1-c_2)}c_{\ell_1}\sqrt{\beta}}{1-(1-c_1)}\nonumber \\
  & = & \frac{\sqrt{1-c_{2s}\sigma^2}|\nu|\sqrt{\beta}}{1-c_{1s}\sigma^2}
  -\frac{\sqrt{1-c_{2s}\sigma^2}c_{\ell_1}\sqrt{\beta}}{1-c_{1s}\sigma^2}\nonumber \\
  & \longrightarrow & \left (1-\frac{c_{2s}}{2}\sigma^2\right )(1+c_{1s}\sigma^2)|\nu|\sqrt{\beta}
  -\left (1-\frac{c_{2s}}{2}\sigma^2\right )(1+c_{1s}\sigma^2)c_{\ell_1}\sqrt{\beta}.
  \end{eqnarray}
After a few additional transformations and neglecting further terms of order $\sigma^2$ or smaller one finally has
\begin{eqnarray}\label{eq:achidml19}
  \sqrt{I}   & \longrightarrow & \left (1-\frac{c_{2s}}{2}\sigma^2\right )(1+c_{1s}\sigma^2)|\nu|\sqrt{\beta}
  -\left (1-\frac{c_{2s}}{2}\sigma^2\right )(1+c_{1s}\sigma^2)c_{\ell_1}\sqrt{\beta} \nonumber \\
     & \longrightarrow & |\nu|\sqrt{\beta}+\left (c_{1s}-\frac{c_{2s}}{2}\right )|\nu|\sqrt{\beta}\sigma^2
 -c_{\ell_1}\sqrt{\beta} -\left (c_{1s}-\frac{c_{2s}}{2}\right )\sigma^2c_{\ell_1}\sqrt{\beta} \nonumber \\
      & \longrightarrow & \frac{|\nu_s|\sqrt{\beta}}{\sigma}+\left (c_{1s}-\frac{c_{2s}}{2}\right )|\nu_s|\sqrt{\beta}\sigma
 -1-c_{\ell_1s}\sigma.
  \end{eqnarray}
Comparing (\ref{eq:achidml19}) to (\ref{eq:achidml16}) we obtain
\begin{eqnarray}\label{eq:achidml20}
\frac{\beta\gamma_1^2+F}{2\sqrt{\beta}|\nu_s|}= \left (c_{1s}-\frac{c_{2s}}{2}\right )|\nu_s|\sqrt{\beta}.
  \end{eqnarray}
Now, we set
\begin{eqnarray}\label{eq:achidml21}
A\triangleq (\beta\gamma_1^2+F)= (2c_{1s}-c_{2s})|\nu_s|^2\beta,
  \end{eqnarray}
and through a utilization of (\ref{eq:achidml8}) continue to obtain
\begin{eqnarray}\label{eq:achidml22}
A= (2c_{1s}-c_{2s}+1)|\nu_s|^2\beta-|\nu_s|^2\beta=\gamma_1^2\alpha-|\nu_s|^2\beta.
  \end{eqnarray}
From (\ref{eq:achidml22}) one then trivially has
\begin{eqnarray}\label{eq:achidml23}
|\nu_s|^2\beta=\gamma_1^2\alpha-A.
  \end{eqnarray}
Recalling on (\ref{eq:dergamma1cclupg1})
\begin{eqnarray}\label{eq:achidml24}
  & & \frac{d\xi_{RD,\gamma_1}(\alpha,\sigma;c_2,c_1,\nu)}{d\gamma_1}=0\nonumber \\
 \Longleftrightarrow & &  \sqrt{\alpha}\sqrt{1-2c_1+c_2+\sigma^2}-r-\frac{\sqrt{c_2}}{2\sqrt{I}}\frac{dI}{d\gamma_1}=0.
\end{eqnarray}
From (\ref{eq:dergamma1cclupg1a})--(\ref{eq:dergamma1cclupg1c}) one finds
\begin{eqnarray}\label{eq:achidml25}
\frac{dI}{d\gamma_1} \longrightarrow 2\gamma_1\beta+2(1-\beta)\gamma_1\erfc(c_{\ell_1}/(\sqrt{2}\gamma_1)).
\end{eqnarray}
A combination of (\ref{eq:achidml24}) and (\ref{eq:achidml25}) gives
\begin{eqnarray}\label{eq:achidml26}
& &  \sqrt{\alpha}\sqrt{1-2c_1+c_2+\sigma^2}-r-\frac{\sqrt{c_2}}{2\sqrt{I}}\frac{dI}{d\gamma_1} \longrightarrow 0 \nonumber \\
\Longleftrightarrow & & \sqrt{\alpha}\sqrt{2c_{1s}-c_{2s}+1}\sigma-r_{sc}r_{socp}-\frac{\sigma}{|\nu_s|\sqrt{\beta}}
(\gamma_1\beta+(1-\beta)\gamma_1\erfc(c_{\ell_1}/(\sqrt{2}\gamma_1))) \longrightarrow 0.
\end{eqnarray}
Utilizing (\ref{eq:achidml9}) and (\ref{eq:achidml23}) we find
\begin{eqnarray}\label{eq:achidml27}
& &  \sqrt{\alpha}\sqrt{1-2c_1+c_2+\sigma^2}-r-\frac{\sqrt{c_2}}{2\sqrt{I}}\frac{dI}{d\gamma_1} \longrightarrow 0 \nonumber \\
\Longleftrightarrow & & \frac{\gamma_1\alpha}{|\nu_s|\sqrt{\beta}}\sigma-r_{sc}\frac{r_{socp}}{\sqrt{n}}-\frac{\sigma}{|\nu_s|\sqrt{\beta}}
(\gamma_1\beta+(1-\beta)\gamma_1\erfc(c_{\ell_1}/(\sqrt{2}\gamma_1))) \longrightarrow 0 \nonumber \\
\Longleftrightarrow & & \gamma_1\alpha\sigma-r_{sc}\frac{r_{socp}}{\sqrt{n}}|\nu_s|\sqrt{\beta}-\sigma
(\gamma_1\beta+(1-\beta)\gamma_1\erfc(c_{\ell_1}/(\sqrt{2}\gamma_1))) \longrightarrow 0\nonumber \\
\Longleftrightarrow & & \gamma_1\alpha\sigma-r_{sc}\frac{r_{socp}}{\sqrt{n}}\sqrt{\gamma_1^2\alpha-A}-
\sigma(\gamma_1\beta+(1-\beta)\gamma_1\erfc(c_{\ell_1}/(\sqrt{2}\gamma_1))) \longrightarrow 0.
\end{eqnarray}
Recalling on the choice $r_{socp}$ from Theorem \ref{thm:lassosocp} one has
\begin{eqnarray}\label{eq:achidml28}
& &  \sqrt{\alpha}\sqrt{1-2c_1+c_2+\sigma^2}-r-\frac{\sqrt{c_2}}{2\sqrt{I}}\frac{dI}{d\gamma_1} \longrightarrow 0 \nonumber \\
\Longleftrightarrow & & \gamma_1\alpha\sigma-r_{sc}\sigma\sqrt{\alpha-\alpha_w}\sqrt{\gamma_1^2\alpha-A}-\sigma
(\gamma_1\beta+(1-\beta)\gamma_1\erfc(c_{\ell_1}/(\sqrt{2}\gamma_1))) \longrightarrow 0 \nonumber \\
\Longleftrightarrow & & \gamma_1\alpha-r_{sc}\sqrt{\alpha-\alpha_w}\sqrt{\gamma_1^2\alpha-A}-
(\gamma_1\beta+(1-\beta)\gamma_1\erfc(c_{\ell_1}/(\sqrt{2}\gamma_1))) =0.
\end{eqnarray}
Choosing $\gamma_1\rightarrow 0$ from (\ref{eq:achidml10}) and (\ref{eq:achidml21}) we have
\begin{eqnarray}\label{eq:achidml29}
F\rightarrow 0 \quad \mbox{and} \quad A\rightarrow \gamma_1^2\beta.
\end{eqnarray}
Moreover, we then find from (\ref{eq:achidml28})
\begin{eqnarray}\label{eq:achidml30}
& &  \sqrt{\alpha}\sqrt{1-2c_1+c_2+\sigma^2}-r-\frac{\sqrt{c_2}}{2\sqrt{I}}\frac{dI}{d\gamma_1} \longrightarrow 0 \nonumber \\
\Longleftrightarrow & & \gamma_1\alpha-r_{sc}\sqrt{\alpha-\alpha_w}\sqrt{\gamma_1^2\alpha-\gamma_1^2\beta}-
(\gamma_1\beta+(1-\beta)\gamma_1\erfc(c_{\ell_1}/(\sqrt{2}\gamma_1))) =0 \nonumber \\
\Longleftrightarrow & & \gamma_1\alpha-r_{sc}\sqrt{\alpha-\alpha_w}\sqrt{\gamma_1^2\alpha-\gamma_1^2\beta}-
\gamma_1\beta =0 \nonumber \\
\Longleftrightarrow & & \alpha-r_{sc}\sqrt{\alpha-\alpha_w}\sqrt{\alpha-\beta}-\beta =0.
\end{eqnarray}
Finally from (\ref{eq:achidml30}) one has for the optimal choice of $r_{sc}$
\begin{eqnarray}\label{eq:achidml31}
r_{sc}^{(opt)}=\sqrt{\frac{\alpha-\beta}{\alpha-\alpha_w}}.
\end{eqnarray}
Also, a combination of (\ref{eq:achidml9}) and (\ref{eq:achidml23}) gives
\begin{eqnarray}\label{eq:achidml32}
\sqrt{2c_{1s}-c_{2s}+1}=\frac{\gamma_1\sqrt{\alpha}}{\gamma_1^2\alpha-A},
\end{eqnarray}
and
\begin{eqnarray}\label{eq:achidml33}
\sqrt{2c_{1s}-c_{2s}}=\sqrt{\frac{A}{\gamma_1^2\alpha-A}}.
\end{eqnarray}
Recalling on the definition of the MSE $\delta$ we further have
\begin{eqnarray}\label{eq:achidml34}
\lim_{n\rightarrow\infty}\delta=\lim_{n\rightarrow\infty}\sqrt{1-2c_1+c2}\longrightarrow\sigma\sqrt{2c_{1s}-c_{2s}}.
\end{eqnarray}
Finally combining (\ref{eq:achidml33}) and (\ref{eq:achidml34}) we find the very ultimate CLuP's residual MSE as $\sigma\rightarrow 0$
\begin{eqnarray}\label{eq:achidml35}
\lim_{n\rightarrow\infty}\delta_{clup}\longrightarrow\sigma\sqrt{2c_{1s}-c_{2s}}=\sigma\sqrt{\frac{A}{\gamma_1^2\alpha-A}}=\sigma\sqrt{\frac{\beta}{\alpha-\beta}},
\end{eqnarray}
or in a more convenient form
\begin{eqnarray}\label{eq:achidml36}
\lim_{\sigma\rightarrow 0}\lim_{n\rightarrow\infty}\frac{\delta_{clup}}{\sigma}=\sqrt{\frac{\beta}{\alpha-\beta}}.
\end{eqnarray}
For particular values $\alpha=0.5$ and $\beta=0.1625$ that we considered earlier, we also have
\begin{eqnarray}\label{eq:achidml37}
\lim_{\sigma\rightarrow 0}\lim_{n\rightarrow\infty}\frac{\delta_{clup}}{\sigma}=\sqrt{\frac{\beta}{\alpha-\beta}}=0.6939.
\end{eqnarray}
It is of course not that hard to see that this is exactly the same as what one gets for the ideal ML after solving the integral in (\ref{eq:numtheory3}). For the completeness we also mention that from the above derivation and (\ref{eq:achidml31}) and (\ref{eq:achidml1}) we have
\begin{eqnarray}\label{eq:achidml38}
r_{sc}^{(opt)}=\sqrt{\frac{\alpha-\beta}{\alpha-\alpha_w}}=2.5981,
\end{eqnarray}
and
\begin{eqnarray}\label{eq:achidml39}
c_{\ell_1}^{(opt)}=\frac{1}{\sqrt{\beta}}=2.4807.
\end{eqnarray}
These values are exactly matching the corresponding values in the last row of Table \ref{tab:tabclup1}. Moreover, as one can observe from the second to last row of the table, for $\sigma=0.01$ ($1/\sigma=100$) one has that the values for $c_{\ell_1}$, $r_{sc}$, and $\delta/\sigma$ are very close to the ultimate limiting ones obtained assuming $\sigma\rightarrow 0$.

\subsubsection{Practial CLuP}
\label{sec:practicalclup}

As achieving the above mentioned very ultimate CLuP performance might require a bit more advanced approach we in Figure \ref{fig:mmselassoclupbest} show the results that can be obtained based on Theorem \ref{thm:cluprd1} but at the same time can also be approached through practical realizations (the above mentioned ultimate level of performance). They are obtained for particular choices for $(r_{sc},c_{\ell_1})$ and in Table \ref{tab:tabclup2} we show in parallel the explicit values for the $(r_{sc},c_{\ell_1})$ that correspond to the data in Figure \ref{fig:mmselassoclupbest}. As discussed earlier, we refer to the resulting CLuP performance as the ultimate CLuP performance. Since the optimal $r_{sc}$ is fairly close to $2$ across all considered $\sigma$ values we opted for a bit of sub-optimality and fixed $r_{sc}=2$. This is convenient from two points of view: 1) while it is sub-obtimal it is still very close to the optimal choice and the resulting MSE is larger by an almost negligible value than the one that can be obtained if one insists on the $r_{sc}$'s optimality; and 2) having $r_{sc}$ being fixed across a range of $\sigma$ values is particularly important when it comes to practical running the underlying CLuP as it doesn't require separate tuning due to potential SNR changes.

\begin{figure}[htb]
\centering
\centerline{\epsfig{figure=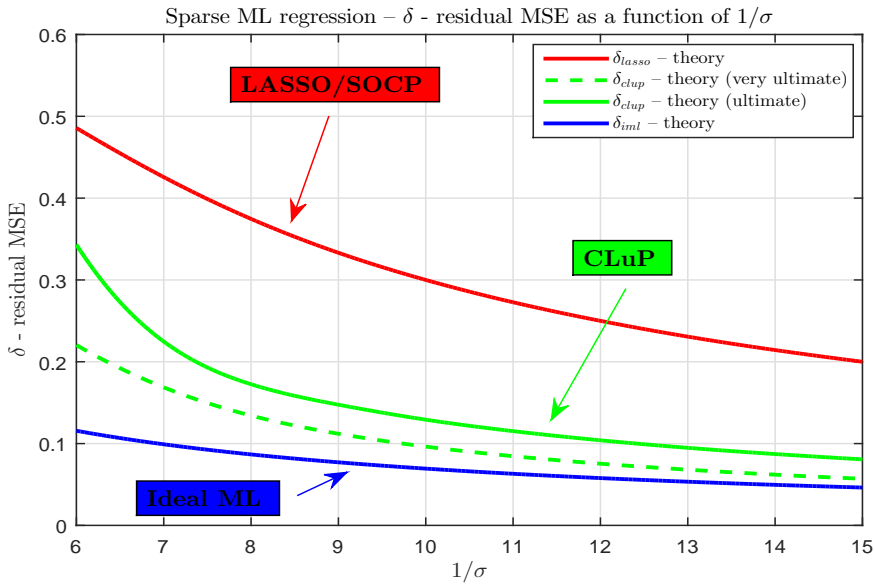,width=13.5cm,height=8cm}}
\caption{Comparison of $\delta$ as a function of $1/\sigma$; $\alpha=0.5$; $\beta=0.1625$; practical favorable level ($n\rightarrow \infty$)}
\label{fig:mmselassoclupbest}
\end{figure}

\begin{table}[h]
\caption{CLuP -- \textbf{theoretical} values for $c_{\ell_1}  $, $r_{sc}  $,  $c_2$,  $c_1$, $\xi_{RD}$, and $\delta$ (ultimate)} \vspace{.1in}
\hspace{-0in}\centering
\footnotesize{
\begin{tabular}{||c||c|c||c|c||c|c||}\hline\hline
$ 1/\sigma $ & $c_{\ell_1}  $ & $r_{sc}  $ &  $c_2$ &  $c_1$ & $\xi_{RD}$ & $\delta$\\ \hline\hline
$\mathbf{6 }$ & $\mathbf{5.05 }$ & $\mathbf{2}$ & $\mathbf{0.7957 }$ & $\mathbf{0.8511 }$ & $\mathbf{0.9392 }$ & $\mathbf{0.3059 }$ \\ \hline
$\mathbf{7 }$ & $\mathbf{4.54 }$ & $\mathbf{2}$ & $\mathbf{0.8464 }$ & $\mathbf{0.9009 }$ & $\mathbf{0.8016 }$ & $\mathbf{0.2114 }$ \\ \hline
$\mathbf{8 }$ & $\mathbf{4.37 }$ & $\mathbf{2}$ & $\mathbf{0.8723 }$ & $\mathbf{0.9218 }$ & $\mathbf{0.7500 }$ & $\mathbf{0.1693 }$ \\ \hline
$\mathbf{9 }$ & $\mathbf{4.27 }$ & $\mathbf{2}$ & $\mathbf{0.8901 }$ & $\mathbf{0.9349 }$ & $\mathbf{0.7171 }$ & $\mathbf{0.1426 }$ \\ \hline
$\mathbf{10 }$ & $\mathbf{4.22 }$ & $\mathbf{2}$ & $\mathbf{0.9026 }$ & $\mathbf{0.9436 }$ & $\mathbf{0.7007 }$ & $\mathbf{0.1239 }$ \\ \hline
$\mathbf{11 }$ & $\mathbf{4.17 }$ & $\mathbf{2}$ & $\mathbf{0.9130 }$ & $\mathbf{0.9505 }$ & $\mathbf{0.6828 }$ & $\mathbf{0.1095 }$ \\ \hline
$\mathbf{12 }$ & $\mathbf{4.14 }$ & $\mathbf{2}$ & $\mathbf{0.9210 }$ & $\mathbf{0.9557 }$ & $\mathbf{0.6719 }$ & $\mathbf{0.0984 }$ \\ \hline
$\mathbf{13 }$ & $\mathbf{4.12 }$ & $\mathbf{2}$ & $\mathbf{0.9275 }$ & $\mathbf{0.9598 }$ & $\mathbf{0.6646 }$ & $\mathbf{0.0894 }$ \\ \hline
$\mathbf{14 }$ & $\mathbf{4.10 }$ & $\mathbf{2}$ & $\mathbf{0.9332 }$ & $\mathbf{0.9632 }$ & $\mathbf{0.6570 }$ & $\mathbf{0.0819 }$ \\ \hline
$\mathbf{15 }$ & $\mathbf{4.09 }$ & $\mathbf{2}$ & $\mathbf{0.9378 }$ & $\mathbf{0.9660 }$ & $\mathbf{0.6533 }$ & $\mathbf{0.0757 }$ \\ \hline
\hline
\end{tabular}}
\label{tab:tabclup2}
\end{table}

Looking at the results presented in Figure \ref{fig:mmselassoclupbest} and Table \ref{tab:tabclup2} one can see that they are trailing by a tiny margin the corresponding ones from Figure \ref{fig:mmselassoclupverybest} and Table \ref{tab:tabclup1}. Moreover, as we will see bellow the values shown in Figure \ref{fig:mmselassoclupbest} and Table \ref{tab:tabclup2} can indeed be approached through the practical running. On the other hand, a similar type of conclusion can not necessarily be made for the very ultimate ones from Figure \ref{fig:mmselassoclupverybest} and Table \ref{tab:tabclup1}. To fully understand the source of the difference between these two sets of results one would need to have an excellent level of understanding of all the intricacies discussed in \cite{Stojnicclupint19,Stojnicclupcmpl19,Stojnicclupplt19,Stojniccluprephased20,Stojniccluplargesc20}. A thorough discussion in this direction goes well beyond the scope of the present paper. However, in a separate paper we will discuss this gap in greater details and provide avenues that eventually lead to bridging the gap.

\subsubsection{Simulations}
\label{sec:numsim}

We have also conducted quite a few numerical experiments. In Figure \ref{fig:mmselassoclupsim} and Table \ref{tab:tabclup3} we show the results that we obtained through them. All the key parameters are as before, i.e. $\alpha=0.5$ and $\beta=0.1625$, and $c_2$ and $\gamma_1$ are determined through the above machinery. There are just few tiny differences though. To ensure better concentrations and consequently an overall smooth large-scale CLuP running we backed off a little bit from the above ultimate choice for $c_{\ell_1}$. We selected $c_{\ell_1}=4.5$ over pretty much the entire SNR range. The sole exception was $1/\sigma=7$ where we selected $c_{\ell_1}=5$. Also, we selected $r_{sc}=2$ for any choice of $\sigma$. It turns out that these almost universal choices are working rather well. One achieves the MSE performance that is fairly close to the above discussed ultimate CLuP and at the same time doesn't need to tune these parameters as functions of $\sigma$. We chose a moderately large $n=2000$ and $c_{q,2}=7\sqrt{n}$ as a solid starting point with an option for occasional periodic increase by a couple of percent after say $50$ iterations. We typically fixed the maximum number of iterations to be $3000$ (much less though often sufficed) and ran the simplest version of the above mentioned large-scale CLuP without any of the sophisticated rerunning options from \cite{Stojniccluprephased20,Stojniccluplargesc20}.
\begin{figure}[htb]
\centering
\centerline{\epsfig{figure=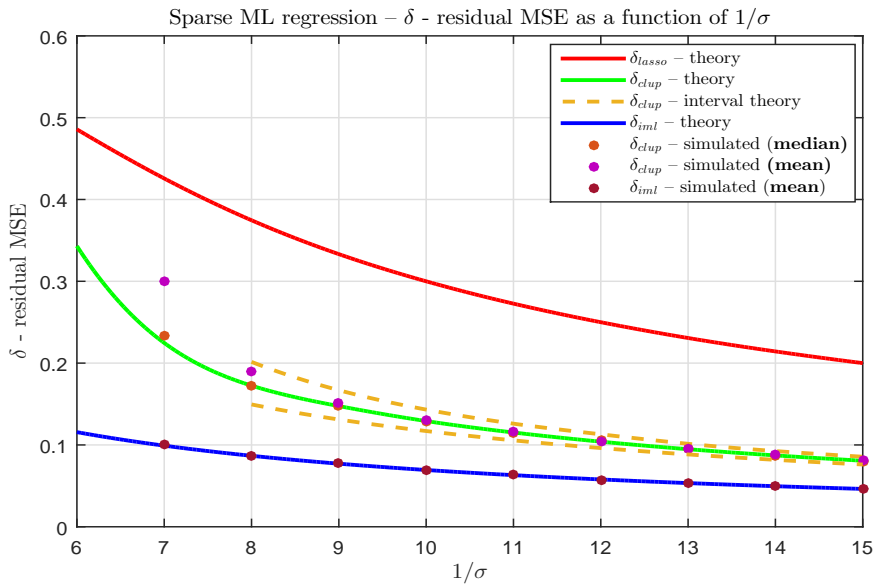,width=13.5cm,height=8cm}}
\caption{Comparison of $\delta$ as a function of $1/\sigma$; $\alpha=0.5$; $\beta=0.1625$; simulations and interval predictions}
\label{fig:mmselassoclupsim}
\end{figure}
\begin{table}[h]
\caption{CLuP -- \textbf{theoretical/\bl{simulated}} values for $c_2$,  $c_1$, $\xi_{RD}$, and $\delta$ ($r_{sc}=2$; $c_{\ell_1}=4.5$ (for $ 1/\sigma=7$, $c_{\ell_1}=5$ )); $n=2000$} \vspace{.1in}
\hspace{-0in}\centering
\footnotesize{
\begin{tabular}{||c||c||c|c||c|c||c|c||c|c|c||}\hline\hline
$ 1/\sigma $ & $\gamma_1$ &  $c_2$ & $c_2$ & $c_1$ & $c_1$ &  $\xi_{RD}$ & $\xi_{RD}$ & $\delta$ & $\delta$ (mean) & $\delta$ (median)\\ \hline\hline
$\mathbf{7 }$ & $\mathbf{3.5060 }$ & $\mathbf{0.8273 }$ & $\bl{\mathbf{0.8025 }}$ & $\mathbf{0.8884 }$ & $\bl{\mathbf{0.8473 }}$ & $\mathbf{0.9720 }$ & $\bl{\mathbf{1.0353 }}$ & $\mathbf{0.2247 }$ & $\bl{\mathbf{0.2997 }}$ & $\bl{\mathbf{0.2341 }}$ \\ \hline
$\mathbf{8 }$ & $\mathbf{3.0630 }$ & $\mathbf{0.8670 }$ & $\bl{\mathbf{0.8579 }}$ & $\mathbf{0.9186 }$ & $\bl{\mathbf{0.9081 }}$ & $\mathbf{0.7998 }$ & $\bl{\mathbf{0.8360 }}$ & $\mathbf{0.1725 }$ & $\bl{\mathbf{0.1907 }}$ & $\bl{\mathbf{0.1722 }}$ \\ \hline
$\mathbf{9 }$ & $\mathbf{3.0664 }$ & $\mathbf{0.8815 }$ & $\bl{\mathbf{0.8788 }}$ & $\mathbf{0.9299 }$ & $\bl{\mathbf{0.9271 }}$ & $\mathbf{0.8064 }$ & $\bl{\mathbf{0.8257 }}$ & $\mathbf{0.1475 }$ & $\bl{\mathbf{0.1517 }}$ & $\bl{\mathbf{0.1474 }}$ \\ \hline
$\mathbf{10 }$ & $\mathbf{3.0703 }$ & $\mathbf{0.8931 }$ & $\bl{\mathbf{0.8899 }}$ & $\mathbf{0.9382 }$ & $\bl{\mathbf{0.9364 }}$ & $\mathbf{0.8102 }$ & $\bl{\mathbf{0.8344 }}$ & $\mathbf{0.1292 }$ & $\bl{\mathbf{0.1302 }}$ & $\bl{\mathbf{0.1289 }}$ \\ \hline
$\mathbf{11 }$ & $\mathbf{3.0741 }$ & $\mathbf{0.9026 }$ & $\bl{\mathbf{0.8993 }}$ & $\mathbf{0.9447 }$ & $\bl{\mathbf{0.9428 }}$ & $\mathbf{0.8125 }$ & $\bl{\mathbf{0.8331 }}$ & $\mathbf{0.1152 }$ & $\bl{\mathbf{0.1163 }}$ & $\bl{\mathbf{0.1152 }}$ \\ \hline
$\mathbf{12 }$ & $\mathbf{3.0775 }$ & $\mathbf{0.9106 }$ & $\bl{\mathbf{0.9065 }}$ & $\mathbf{0.9499 }$ & $\bl{\mathbf{0.9478 }}$ & $\mathbf{0.8141 }$ & $\bl{\mathbf{0.8370 }}$ & $\mathbf{0.1040 }$ & $\bl{\mathbf{0.1046 }}$ & $\bl{\mathbf{0.1053 }}$ \\ \hline
$\mathbf{13 }$ & $\mathbf{3.0805 }$ & $\mathbf{0.9173 }$ & $\bl{\mathbf{0.9147 }}$ & $\mathbf{0.9542 }$ & $\bl{\mathbf{0.9528 }}$ & $\mathbf{0.8151 }$ & $\bl{\mathbf{0.8310 }}$ & $\mathbf{0.0948 }$ & $\bl{\mathbf{0.0955 }}$ & $\bl{\mathbf{0.0946 }}$ \\ \hline
$\mathbf{14 }$ & $\mathbf{3.0832 }$ & $\mathbf{0.9231 }$ & $\bl{\mathbf{0.9201 }}$ & $\mathbf{0.9578 }$ & $\bl{\mathbf{0.9562 }}$ & $\mathbf{0.8158 }$ & $\bl{\mathbf{0.8319 }}$ & $\mathbf{0.0872 }$ & $\bl{\mathbf{0.0879 }}$ & $\bl{\mathbf{0.0874 }}$ \\ \hline
$\mathbf{15 }$ & $\mathbf{3.0856 }$ & $\mathbf{0.9282 }$ & $\bl{\mathbf{0.9269 }}$ & $\mathbf{0.9608 }$ & $\bl{\mathbf{0.9602 }}$ & $\mathbf{0.8162 }$ & $\bl{\mathbf{0.8228 }}$ & $\mathbf{0.0807 }$ & $\bl{\mathbf{0.0806 }}$ & $\bl{\mathbf{0.0804 }}$ \\ \hline
\hline
\end{tabular}}
\label{tab:tabclup3}
\end{table}
As can be seen from both,  Figure \ref{fig:mmselassoclupsim} and Table \ref{tab:tabclup3}, the agreement between the theoretical predictions and the values obtained through the simulations for all four key critical system parameters $c_2$,  $c_1$, $\xi_{RD}$, and $\delta$  is rather solid. There are very tiny differences in the low SNR regime $1/\sigma=7$ or $1/\sigma=8$. This is due to occasional appearance of bad instances. We also provided the median values for the MSE. As can be seen from Table \ref{tab:tabclup3} even in these regimes the median values are almost exactly matching the theoretical predictions. For the completeness we also simulated the ideal ML performance. As can be seen from Figure \ref{fig:mmselassoclupsim}, the results we obtained for the ideal ML are almost indistinguishable from the corresponding ones obtained through the theoretical predictions, i.e. through (\ref{eq:numtheory3}). We should also add that one can define $r_{sc},c_{\ell_1}$ dependent limiting MSE as
\begin{equation}\label{eq:simeq1}
  \lim_{\sigma\rightarrow 0}\frac{\delta(r_{sc},c_{\ell_1})}{\sigma}.
\end{equation}
Repeating earlier calculations one can find that for the above choice $r_{sc}=2$ and $c_{\ell_1}=4.5$ the limiting MSE is
\begin{equation}\label{eq:simeq1}
  \lim_{\sigma\rightarrow 0}\lim_{n\rightarrow\infty}\frac{\delta(2,4.5)}{\sigma}=1.1178.
\end{equation}
This is slightly above the very ultimate value $0.6939$ discussed earlier but still fairly close to it and almost three times smaller than the corresponding value $3$ that one obtains from Theorem \ref{thm:lassosocp} for LASSO/SOCP.

Carefully looking at what we presented in Figure \ref{fig:mmselassoclupsim}, one can observe that in addition to the already discussed results related to the LASSO/SOCP, the CLuP, and the ideal ML we also presented the theoretical predictions for the given scenarios based on the so-called \textbf{\emph{interval approach}} (light brown dashed curve). This approach is relatively basic but as can be seen from Figure \ref{fig:mmselassoclupsim} gives pretty good bounding interval for the resulting MSE. Below, we will briefly sketch how one can obtain these interval predictions.

\subsubsection{Interval approach predictions}
\label{sec:numinterval}

All the main ingredients needed for this approach are already present in our earlier discussions in the previous sections as well as in a long line of work \cite{Stojnicclupint19,Stojnicclupcmpl19,Stojnicclupplt19,Stojniccluplargesc20,Stojniccluprephased20,StojnicCSetam09,StojnicCSetamBlock09,StojnicISIT2010binary,StojnicDiscPercp13,StojnicUpper10,StojnicGenLasso10,StojnicGenSocp10,StojnicPrDepSocp10,StojnicRegRndDlt10,Stojnicbinary16fin,Stojnicbinary16asym}. Here we only formalize a procedure that can be utilized to obtain the results shown in Figure \ref{fig:mmselassoclupsim}.

We start by viewing $c_1$ as a running parameter and continue by considering the following optimization
\begin{equation}
\min_{c_2}\max_{\gamma_1,\nu}(\xi_{RD,\gamma_1}(\alpha,\sigma;c_2,c_1,\nu)+\sqrt{c_2}). \label{eq:clupinterval1}
\end{equation}
Let the functions ${c}_{2f}(c_1)$, $\gamma_{1f}(c_1)$, and $\nu_f(c_1)$ be the solution triplet of this optimization. Utilizing the machineries of \cite{StojnicGenLasso10,StojnicGenSocp10,StojnicPrDepSocp10} and the strong random duality one then has
\begin{equation}
\xi_{ub}\triangleq\min_{c_1}(\xi_{RD,\gamma_{1f}(c_1)}(\alpha,\sigma;c_{2f}(c_1),c_1,\nu_f(c_1)))\geq \xi_p. \label{eq:clupinterval2}
\end{equation}
It then trivially follows that $\xi_{ub}\geq \max_{\gamma_1,\nu}\xi_{RD,\gamma_1}(\alpha,\sigma;c_2,c_1,\nu)$ and one can then establish the intervals where $\delta_{clup}$ resides with overwhelming probability as
\begin{equation}\label{eq:clupinterval3}
  \delta_{clup}\in[\delta_{lb},\delta_{ub}],
\end{equation}
with
\begin{eqnarray}
\delta_{lb} & = & \min_{c_1,c_2} \sqrt{1-2c_1+c_2}\nonumber \\
\mbox{subject to} & & |\xi_{ub}-\max_{\gamma_1,\nu}\xi_{RD,\gamma_1}(\alpha,\sigma;c_2,c_1,\nu)|=0, \label{eq:clupinterval4}
\end{eqnarray}
and
\begin{eqnarray}
\delta_{ub} & = & \max_{c_1,c_2} \sqrt{1-2c_1+c_2}\nonumber \\
\mbox{subject to} & & |\xi_{ub}-\max_{\gamma_1,\nu}\xi_{RD,\gamma_1}(\alpha,\sigma;c_2,c_1,\nu)|=0. \label{eq:clupinterval5}
\end{eqnarray}
We summarize this mechanism in the following theorem.
\begin{theorem}(\bl{\textbf{CLuP}} -- simple interval approach)
  Let $\xi_{RD,\gamma_1}(\alpha,\sigma;c_2,c_1,\nu)$ be as in (\ref{eq:clup5a2}). Also, let ${c}_{2f}(c_1)$, $\gamma_{1f}(c_1)$, and $\nu_f(c_1)$ be the triplet of functions that are the solution to the optimization problem in (\ref{eq:clupinterval1}). Set
  \begin{equation}
\xi_{ub}\triangleq\min_{c_1}(\xi_{RD,\gamma_{1f}(c_1)}(\alpha,\sigma;c_{2f}(c_1),c_1,\nu_f(c_1))). \label{eq:thmclupinterval2}
\end{equation}
One then has
\begin{equation}\label{eq:thmclupinterval3}
 \lim_{n\rightarrow\infty}P(\delta_{clup}\triangleq\|\x_{sol}-\x^{(clup)}\|_2\in[\delta_{lb},\delta_{ub}])\rightarrow 1,
\end{equation}
where $\delta_{lb}$ and $\delta_{ub}$ are as given in (\ref{eq:clupinterval4}) and (\ref{eq:clupinterval5}), respectively.
\label{thm:thmclupinterval}
\end{theorem}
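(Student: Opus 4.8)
The plan is to read off the interval from three ingredients already assembled in the preceding subsections: the strong random duality that identifies the limiting normalized primal value $\xi_p$ from (\ref{eq:LSclup1}) with the random-dual value, a concentration statement that pins the CLuP solution's invariants $c_1=(\x_{sol})^T\x^{(clup)}$ and $c_2=\|\x^{(clup)}\|_2^2$, and a purely geometric analysis of the sublevel set of the map $(c_1,c_2)\mapsto\max_{\gamma_1,\nu}\xi_{RD,\gamma_1}(\alpha,\sigma;c_2,c_1,\nu)$. The guiding observation is that $\delta_{clup}=\sqrt{1-2c_1+c_2}$ depends on the solution only through $(c_1,c_2)$, so once these two invariants are confined to an explicit region the MSE is confined to the range of $\delta$ over that region.

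First I would recall the primal from (\ref{eq:clup5a}) together with its random dual (\ref{eq:clup10})--(\ref{eq:clup10d}), and invoke Theorem 1 of \cite{Stojnicclupint19} with the strong-random-duality results of \cite{StojnicGenSocp10,StojnicPrDepSocp10} to conclude that $\xi_p$ coincides with the random-dual value and that the CLuP objective concentrates exponentially in $n$ around $\xi_p$. Next I would justify the bound (\ref{eq:clupinterval2}), namely $\xi_{ub}\geq\xi_p$. The added $+\sqrt{c_2}$ in (\ref{eq:clupinterval1}) cancels the objective term $-\sqrt{c_2}$ in (\ref{eq:clup5a2}), so the inner program optimizes only the feasibility/Lagrangian part for each fixed running $c_1$; reinserting the genuine objective at the resulting triple $(c_{2f}(c_1),\gamma_{1f}(c_1),\nu_f(c_1))$ and minimizing over $c_1$ yields a configuration that is sub-optimal for the full min-min-max in (\ref{eq:clupg5a1}), whence $\xi_{ub}\geq\xi_p$ follows from the standard max-min ordering inequalities.

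Then I would combine concentration with this bound. With overwhelming probability the solution is feasible, $\|\y-A\x^{(clup)}\|_2\leq r$, and its objective equals $\xi_p$; since $\max_{\gamma_1,\nu}\xi_{RD,\gamma_1}$ evaluated at the solution's $(c_1,c_2)$ cannot exceed $\xi_{ub}\geq\xi_p$, the pair $(c_1,c_2)$ lands in the sublevel set $\{(c_1,c_2):\max_{\gamma_1,\nu}\xi_{RD,\gamma_1}(\alpha,\sigma;c_2,c_1,\nu)\leq\xi_{ub}\}$, which is compact inside $c_2\in[0,1]$, $c_1\in[0,\sqrt{c_2}]$. Because $\delta(c_1,c_2)=\sqrt{1-2c_1+c_2}$ has non-vanishing gradient on this domain, its extrema over the sublevel set are attained on the boundary level set $\{\max_{\gamma_1,\nu}\xi_{RD,\gamma_1}=\xi_{ub}\}$; solving the two constrained programs (\ref{eq:clupinterval4}) and (\ref{eq:clupinterval5}) therefore returns precisely the range endpoints $\delta_{lb}$ and $\delta_{ub}$, and $\delta_{clup}$ lies between them. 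Letting $n\to\infty$ and intersecting the finitely many concentration events gives (\ref{eq:thmclupinterval3}).

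The hard part will be making the second and third steps simultaneously rigorous. One must verify that the direction of the random-dual inequality is genuinely preserved (the Gordon-type comparison underlying the RDT produces $\geq$, not $\leq$, and the $+\sqrt{c_2}$ decoupling must not silently flip it), and one must ensure the constraint $\max_{\gamma_1,\nu}\xi_{RD,\gamma_1}\leq\xi_{ub}$ holds for the \emph{actual} CLuP solution rather than merely for the global RDT minimizer, since the iteration (\ref{eq:LSclup6}) may settle at a different stationary point. This forces a concentration that is uniform over the continuum of parameters $(c_1,c_2,\gamma_1,\nu)$, which I would secure through a Lipschitz/$\epsilon$-net argument so that the level-set confinement holds simultaneously for all admissible invariants with overwhelming probability.
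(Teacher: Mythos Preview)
Your outline follows the same route the paper takes: the paper's own proof is literally the one-line ``follows through the above considerations and ultimately a collection of our earlier results,'' and the discussion immediately preceding the theorem (equations (\ref{eq:clupinterval1})--(\ref{eq:clupinterval5})) is precisely the three-step argument you reconstruct (strong random duality gives $\xi_p$, the $+\sqrt{c_2}$ decoupling yields $\xi_{ub}\geq\xi_p$, and the sublevel-set confinement of $(c_1,c_2)$ pins $\delta_{clup}$ to $[\delta_{lb},\delta_{ub}]$). Your identification of the delicate points---the need for uniform concentration and the question of which stationary point the CLuP iteration actually reaches---is apt; the paper does not spell these out either and defers them to the cited RDT machinery.
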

\begin{proof}
  Follows through the above considerations and ultimately a collection of our earlier results from \cite{Stojnicclupint19,Stojnicclupcmpl19,Stojnicclupplt19,Stojniccluplargesc20,Stojniccluprephased20,StojnicCSetam09,StojnicCSetamBlock09,StojnicISIT2010binary,StojnicDiscPercp13,StojnicUpper10,StojnicGenLasso10,StojnicGenSocp10,StojnicPrDepSocp10,StojnicRegRndDlt10,Stojnicbinary16fin,Stojnicbinary16asym}.
\end{proof}
In Table \ref{tab:tabclup4} we show the numerical results that one can obtain based on Theorem \ref{thm:thmclupinterval}. These results exactly correspond to the data shown as light brown dashed curve in Figure \ref{fig:mmselassoclupsim}. As can be seen from both, Figure \ref{fig:mmselassoclupsim} and Table \ref{tab:tabclup4}, the intervals where $\delta_{clup}$ is located are fairly narrow and quite close to the ultimate predictions obtained based on Theorem \ref{thm:cluprd1}. Moreover, as $\sigma$ goes down the widths of the interval shrink.
\begin{table}[h]
\caption{CLuP -- \textbf{theoretical/\bl{simulated}} values for $\delta$ together with the intervals $[\delta_{lb},\delta_{ub}]$ ($r_{sc}=2$; $c_{\ell_1}=4.5$; $n=2000$} \vspace{.1in}
\hspace{-0in}\centering
\footnotesize{
\begin{tabular}{||c||c||c|c|c||c||}\hline\hline
$ 1/\sigma $ & $\delta_{lb}$ & $\delta$ & $\delta$ (mean) & $\delta$ (median) & $\delta_{ub}$\\ \hline\hline
$\mathbf{8 }$ & $\ultclupcola{\mathbf{0.1495 }}$ & $\mathbf{0.1725 }$ & $\bl{\mathbf{0.1907 }}$ & $\bl{\mathbf{0.1722 }}$ & $\ultclupcola{\mathbf{0.2015 }}$ \\ \hline
$\mathbf{9 }$ & $\ultclupcola{\mathbf{0.1311 }}$ & $\mathbf{0.1475 }$ & $\bl{\mathbf{0.1517 }}$ & $\bl{\mathbf{0.1474 }}$ & $\ultclupcola{\mathbf{0.1669 }}$ \\ \hline
$\mathbf{10 }$ & $\ultclupcola{\mathbf{0.1170 }}$ & $\mathbf{0.1292 }$ & $\bl{\mathbf{0.1302 }}$ & $\bl{\mathbf{0.1289 }}$ & $\ultclupcola{\mathbf{0.1432 }}$ \\ \hline
$\mathbf{11 }$ & $\ultclupcola{\mathbf{0.1056 }}$ & $\mathbf{0.1152 }$ & $\bl{\mathbf{0.1163 }}$ & $\bl{\mathbf{0.1152 }}$ & $\ultclupcola{\mathbf{0.1260 }}$ \\ \hline
$\mathbf{12 }$ & $\ultclupcola{\mathbf{0.0962 }}$ & $\mathbf{0.1040 }$ & $\bl{\mathbf{0.1046 }}$ & $\bl{\mathbf{0.1053 }}$ & $\ultclupcola{\mathbf{0.1130 }}$ \\ \hline
$\mathbf{13 }$ & $\ultclupcola{\mathbf{0.0884 }}$ & $\mathbf{0.0948 }$ & $\bl{\mathbf{0.0955 }}$ & $\bl{\mathbf{0.0946 }}$ & $\ultclupcola{\mathbf{0.1014 }}$ \\ \hline
$\mathbf{14 }$ & $\ultclupcola{\mathbf{0.0818 }}$ & $\mathbf{0.0872 }$ & $\bl{\mathbf{0.0879 }}$ & $\bl{\mathbf{0.0874 }}$ & $\ultclupcola{\mathbf{0.0925 }}$ \\ \hline
$\mathbf{15 }$ & $\ultclupcola{\mathbf{0.0761 }}$ & $\mathbf{0.0807 }$ & $\bl{\mathbf{0.0806 }}$ & $\bl{\mathbf{0.0804 }}$ & $\ultclupcola{\mathbf{0.0856 }}$ \\ \hline
\hline
\end{tabular}}
\label{tab:tabclup4}
\end{table}

\section{Conclusion}
\label{sec:conc}

In this paper we consider the so-called sparse ML regression problem. It is among the most fundamental estimation problems and as such appears in many scientific fields, with statistics, machine learning, information theory, linear estimation, and signal processing probably being mathematically the most prominent ones. We introduced a novel algorithmic mechanism (that we refer to as CLuP) for handling these kinds of ML problems. The algorithm is based on some of our recent works \cite{Stojnicclupint19,Stojnicclupcmpl19,Stojnicclupplt19,Stojniccluplargesc20,Stojniccluprephased20} related to the MIMO ML detection and a long line of our work related to the \bl{\textbf{Random duality theory (RDT)}} that we systematically developed in \cite{StojnicCSetam09,StojnicCSetamBlock09,StojnicISIT2010binary,StojnicDiscPercp13,StojnicUpper10,StojnicGenLasso10,StojnicGenSocp10,StojnicPrDepSocp10,StojnicRegRndDlt10,Stojnicbinary16fin,Stojnicbinary16asym}.

As an RDT mechanism, the CLuP has some of the typical RDT excellent features. One can precisely characterize its performance and at the same time utilize the underlying analysis to create its a convenient practical implementation. Moreover, we presented a particular implementation that is well tailored for the so-called large-scale scenarios. Such scenarios are expected to dominate the big data era and are of special interest in the context of statistical regression, estimation, data analysis, and prediction. Although very simple, the implementation that we introduced can handle with an ease problems with several thousands of unknowns. In fact, its computational complexity per iteration is \textbf{\emph{quadratic}} and as such theoretically minimal possible and includes literally only a single matrix-vector multiplication. Moreover, when it comes to potential implementations on problems with numbers of variables reaching several hundreds of thousands or millions one can not hope to have a more efficient per iteration algorithm.

In addition to providing precise performance characterizations and ensuring excellent large scale capabilities, we also observe that the characterized performances themselves are also very favorable. We discussed several types of performances, from the very ultimate one to the practically more realizable ones. The introduced CLuP mechanism on its a very ultimate level can even achieve the exact ideal ML (the ideal ML is an unbreakable barrier within the ML context as it relies on the exact knowledge (which of course typically is not available in regression or any other ML considerations) of the unknown vector of parameters' support). Its very ultimate residual ML is also significantly smaller (often more than $4$ times) than the corresponding one of the state of the art LASSO or SOCP alternatives. In addition to the very ultimate CLuP's implementation we also looked at more practical realizations and demonstrated that they come fairly close to the very ultimate level while maintaining excellent computational complexity properties and being able to easily handle problems with several thousands of unknowns. We accompanied our theoretical predictions with a solid set of results obtained through numerical simulations. As is typically the case for any of our random duality algorithmic considerations, the observed agreement between the theoretical predictions and the values obtained through the numerical experiments is very strong.

There are many different avenueas for building further. We mentioned quite a few of them throughout the presentation itself. In particular, given that this is the introductory paper on CLuP's handling of sparse regression, we opted for a bit of simplicity and showcased a CLuP's performance that is a bit below the very ultimate one. In some of our separate papers we will discuss in great details more advanced strategies that one can utilize to actually approach the very ultimate performance. Also, the presented technical machinery can easily handle other types of regressions. Moreover, it can easily be adapted to handle similar types of regression that appear in different scientific fields as well. Sometimes, all these applications require a bit of technical modifications but are basically a routine procedure once the core mechanisms presented here and in \cite{Stojnicclupint19,Stojnicclupcmpl19,Stojnicclupplt19,Stojniccluplargesc20,Stojniccluprephased20} (and earlier in \cite{StojnicCSetam09,StojnicCSetamBlock09,StojnicISIT2010binary,StojnicDiscPercp13,StojnicUpper10,StojnicGenLasso10,StojnicGenSocp10,StojnicPrDepSocp10,StojnicRegRndDlt10,Stojnicbinary16fin,Stojnicbinary16asym}) are available. As mentioned earlier, for some of the most interesting related problems we will in separate papers show how these modifications can be done and what kind of results one eventually can get through them.

\begin{singlespace}
\bibliographystyle{plain}
\bibliography{clupmlspregRefs}
\end{singlespace}

\end{document}